\newtheorem{theorem}{Theorem}
\newtheorem{algorithm}[theorem]{Algorithm}
\newtheorem{proposition}[theorem]{Proposition}
\newenvironment{proof}{ \textbf{Proof:} }{ \hfill $\Box$}
\newcommand{\figref}[1]{{Fig.}~\ref{#1}}
\def\bb0{{\mathbb{0}}}
\def\ba{{\mathbf{a}}}
\def\bb{{\mathbf{b}}}
\def\bff{{\mathbf{f}}}
\def\bm{{\mathbf{m}}}
\def\bn{{\mathbf{n}}}
\def\bs{{\mathbf{s}}}
\def\bx{{\mathbf{x}}}
\def\by{{\mathbf{y}}}
\def\b0{{\mathbf{0}}}
\def\bA{{\mathbf{A}}}
\def\bD{{\mathbf{D}}}
\def\bF{{\mathbf{F}}}
\def\bH{{\mathbf{H}}}
\def\bI{{\mathbf{I}}}
\def\bP{{\mathbf{P}}}
\def\bR{{\mathbf{R}}}
\def\bU{{\mathbf{U}}}
\def\bV{{\mathbf{V}}}
\def\bW{{\mathbf{W}}}
\def\bX{{\mathbf{X}}}
\def\bbE{{\mathbb{E}}}
\def\cA{\mathcal{A}}
\def\sf0{{\mathsf{0}}}
\def\Nt{{N_t}}
\def\Nr{{N_r}}
\newcommand{\sref}[1]{{Section}~\ref{#1}}
\def\Fbb{\bF_\mathrm{BB}}
\def\Frf{\bF_\mathrm{RF}}
\def \Nr {N_{\rm{RX}}}
\def \Nrf {N_{\rm{RF}}}
\def \Nt {N_{\rm{TX}}}
\def \Frf {\bF_{\rm{RF}}}
\def \Fbb {\bF_{\rm{BB}}}
\begin{document}
\title{Dynamic Subarrays  for Hybrid Precoding in Wideband mmWave MIMO Systems}
\author{ Sungwoo Park, Ahmed Alkhateeb, and Robert W. Heath Jr. \thanks{Sungwoo Park, Ahmed Alkhateeb, and Robert W. Heath Jr. are with The University of Texas at Austin (Email:  spark, aalkhateeb, rheath@utexas.edu).} \thanks{This work is supported in part by the National Science Foundation under Grant No. 1319556, and by a gift from Huawei Technologies.}}
\maketitle 

\begin{abstract}
	Hybrid analog/digital precoding architectures can address the trade-off between achievable spectral efficiency and power consumption in large-scale MIMO systems. This makes it a promising candidate for millimeter wave systems, which require deploying large antenna arrays at both the transmitter and receiver to guarantee sufficient received signal power. Most prior work on hybrid precoding focused on narrowband channels and assumed fully-connected hybrid architectures. MmWave systems, though, are expected to be wideband with frequency selectivity. In this paper, a closed-form solution for fully-connected OFDM-based hybrid analog/digital precoding is developed for frequency selective mmWave systems. This solution is then extended to partially-connected but fixed architectures in which each RF chain is connected to a specific subset of the antennas. The derived solutions give insights into how the hybrid subarray structures should be designed. Based on them, a novel technique that dynamically constructs the hybrid subarrays based on the long-term channel characteristics is developed. Simulation results show that the proposed hybrid precoding solutions achieve spectral efficiencies close to that obtained with fully-digital architectures in wideband mmWave channels.  Further, the results indicate that the developed dynamic subarray solution outperforms the fixed hybrid subarray structures in various system and channel conditions.  
\end{abstract}

\section{Introduction} \label{sec:Intro}


Hybrid analog/digital architectures are efficient transceivers for millimeter wave (mmWave) systems \cite{Zhang2005a,Venkateswaran2010,ElAyach2014,Alkhateeb2014d,Roh2014,HeathJr2015}. These architectures enable a flexible compromise between achieving high spectral efficiency and maintaining low cost and power consumption. Extensive work has been devoted to developing hybrid precoding algorithms to single-user and multi-user mmWave and massive MIMO systems in the last few years \cite{ElAyach2014,Alkhateeb2013,Ni2016,Yu2016,Mendez-Rial2015a,Chen2015,Han2015,ElAyach2013}. Most prior work on hybrid precoding focused on narrowband channels. MmWave systems, however, will likely operate on wideband channels with frequency selectivity \cite{Rappaport2013a,Samimi2014,Pi2011}. It is, therefore, important to develop hybrid analog/digital precoding designs for frequency selective mmWave systems. 

\subsection{Prior Work}

 Hybrid architectures divide the processing needed for precoding and combining between analog and digital domains to reduce the number of RF chains \cite{Zhang2005a,Venkateswaran2010,ElAyach2014,Alkhateeb2013,Ni2016,Yu2016,Mendez-Rial2015a,Chen2015,Han2015,ElAyach2013,Kim2013,Alkhateeb2015a}. In \cite{Zhang2005a,Venkateswaran2010}, hybrid precoding was first investigated for diversity and  multiplexing gains in general MIMO systems. For mmWave large MIMO systems, \cite{ElAyach2014} leveraged the sparse nature of mmWave channels and designed low-complexity hybrid precoding algorithms based on orthogonal matching pursuit. Following \cite{ElAyach2014}, the work in \cite{Alkhateeb2013,Ni2016,Yu2016,Mendez-Rial2015a,Chen2015} devised hybrid precoding algorithms based on matrix decomposition, alternative minimization, and other techniques, with the objective of achieving spectral efficiencies close to that obtained with fully-digital solutions. The system models in \cite{ElAyach2014,Alkhateeb2013,Ni2016,Yu2016,Mendez-Rial2015a,Chen2015} adopted a fully-connected hybrid architecture, meaning that each RF chain is connected to all the antennas. Extensions to subarray-based hybrid architectures
were considered in \cite{Han2015,ElAyach2013}. The work in \cite{ElAyach2014,Alkhateeb2013,Ni2016,Yu2016,Mendez-Rial2015a,Chen2015,Han2015,ElAyach2013}, though, assumed a narrowband mmWave channel, with perfect or partial channel knowledge at the transmitter.

Limited work has been done for wideband mmWave hybrid precoding systems. In \cite{Kim2013}, hybrid beamforming with only a single-stream transmission over MIMO-OFDM systems was considered. The developed solution in \cite{Kim2013}, though, relied on the exhaustive search over the RF and baseband codebooks, and did not provide specific criteria for the design of these codebooks. In \cite{Alkhateeb2015a}, for OFDM-based mmWave hybrid precoding systems, the optimal baseband precoders for a given RF codebook were obtained, and efficient codebooks were designed. The work in \cite{Alkhateeb2015a}, however, did not exploit the channel correlation over the adjacent subcarriers to reduce the precoder design complexity. Further, the work in \cite{Kim2013,Alkhateeb2015a} considered only the fully-connected hybrid architecture, which consumes more power consumption compared to the subarray structure \cite{Han2015,ElAyach2013}, which connects each RF chain to only a subset of the antennas.

\subsection{Contribution}

In this paper, we develop hybrid precoding designs for wideband mmWave large MIMO systems. The contributions of this paper are summarized as follows.
\begin{itemize} 
\item {We develop a near-optimal closed-form solution for fully-connected and partially-connected hybrid analog/digital precoding in OFDM-based wideband mmWave systems. In our design, we assume fully-digital receivers and adopt a relaxation of the original mutual information maximization problem. For the relaxed problem, we obtain the optimal baseband and RF precoders. The developed solution has exactly the same spectral efficiency as the unconstrained fully-digital solution if the number of channel paths is less than the number of RF chains. Therefore, thanks to the sparse nature of the mmWave channel, the proposed hybrid precoding with a small number of RF chains can achieve a spectral efficiency near to that obtained with the unconstrained fully-digitalized baseband precoding. Further, the developed closed-form solution provides insights into the impact of the subarray structures on the overall system performance.}

\item {We propose a criterion to construct the optimal subarrays that maximize a proxy of the system spectral efficiency, i.e., the best partitioning/grouping of the antennas over the RF chains. 
Using this criterion, we propose a dynamic subarray structure that adapts the subarray structure according to the long-term channel statistics. 
Finding the optimal subarrays requires an exhaustive search over many antenna partitioning solutions. To lower the complexity, we propose a greedy algorithm that approaches the spectral efficiency of the optimal exhaustive search solution. }

\end{itemize}

The proposed hybrid precoding designs were also evaluated by simulations. Results show that the developed wideband hybrid precoding design approaches the spectral efficiencies of the fully-digital solutions for both fully-connected and fixed-subarray architectures. For the dynamic subarrays, results indicate that their performance outperforms any fixed subarray structure, promoting their potential advantages in wideband mmWave systems. 

\noindent \textbf{Notation:} We use the following notation throughout this paper: $\bA$ is a matrix, $\ba$ is a vector, $a$ is a scalar, and $\cA$ is a set. $|a|$ and $\measuredangle{a}$ are the magnitude and phase of the complex number $a$. $\|\bA \|_F$ is its Frobenius norm, and $\bA^T$, $\bA^*$, and $\bA^{-1}$ are its transpose, Hermitian (conjugate transpose), and inverse, respectively. $[\bA]_{1:k}$ denotes the matrix that is composed of the first $k$ columns of the matrix $\bA$. $\mathrm{diag}(\ba)$ is a diagonal matrix with the entries of $\ba$ on its diagonal, and $\mathrm{blkdiag} \left(\ba_1, \cdots, \ba_k \right)$ is a block diagonal matrix with $\ba_i$'s on its diagonal blocks. $[\bA]_{m,n}$ is the $(m,n)$-th element of the matrix $\bA$. $\measuredangle{\bA}$ is a matrix with the $(m,n)$-th element equals $e^{j \left[\bA \right]_{m,n}}$. $\bI$ is the identity matrix and $\mathbf{1}_{N}$ is the $N$-dimensional all-ones vector. $\mathcal{CN}(\bm,\bR)$ is a complex Gaussian random vector with mean $\bm$ and covariance $\bR$. $\bbE\left[\cdot\right]$ is used to denote expectation.

\section{System and Channel Models} \label{sec:Model}
In this section, we introduce the adopted system and channel models for wideband hybrid precoding.
\subsection{System Model}
\begin{figure}[!t]
	\centerline{\resizebox{1.0\columnwidth}{!}{\includegraphics{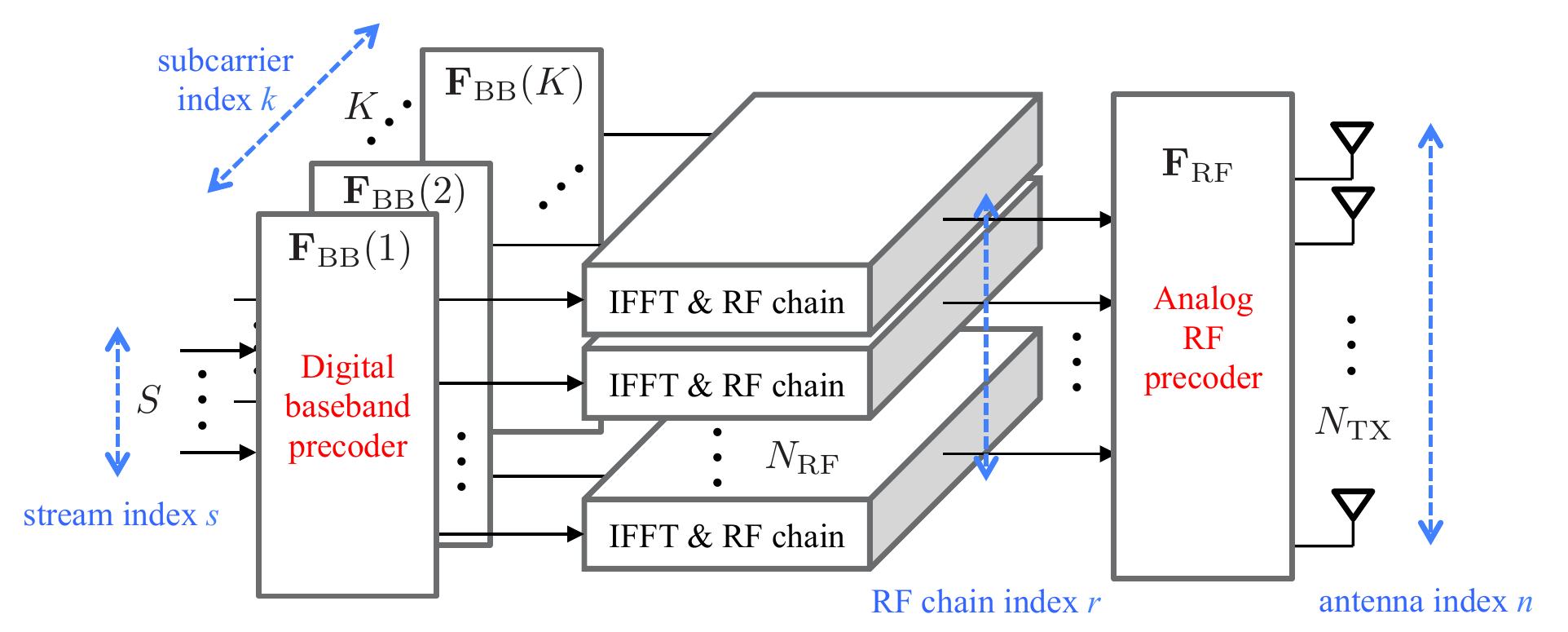}}}
	\caption{Hybrid precoding architecture in a wideband OFDM-MIMO system.}\label{fig:WB_HP_architecture}
\end{figure}
Consider the system model in \figref{fig:WB_HP_architecture}. A base station (BS) with $\Nt$ antennas and $\Nrf$ RF chains, $\Nrf \leq \Nt$, communicates with a mobile user that has $\Nr$ antennas, via $S$ streams. We assume in this paper that the number of RF chains at the mobile user is equal to the number of antennas, and focus on the hybrid precoding design at the BS. We adopt a hybrid precoding MIMO-OFDM transmission model, similar to \cite{Alkhateeb2015a} with $K$ subcarriers. Let  $ \bF_{\mathrm{RF}}$ be an $\Nt \times \Nrf$ wideband analog RF precoding matrix, and $\bF_{\rm{BB}}[k]$ be an $\Nrf \times S$ matrix that represents the digital baseband precoding at the $k$-th subcarrier.  The transmitted signal can be expressed as
\begin{equation}\label{eq:WB_hybrid_precoding_model}
\bx[k] = \bF_{\rm{RF}}  \bF_{\rm{BB}}[k] \bs[k], \;\; \textrm{for} \;\; k=1,\dots,K,
\end{equation}
where $\bs[k]$ is the $S \times 1$ vector of transmitted symbols at subcarrier $k$ with $\bbE\left[\bs[k] \bs^*[k]\right]=\bI_{S}$.  While the analog RF precoding, $ \bF_{\rm{RF}} $, is performed in the time domain and the same precoding matrix is applied for the entire bandwidth, the digital baseband precoding, $ \bF_{\rm{BB}}[k] $, is performed in the frequency domain on a per-subcarrier basis. This is a main distinguishing feature of OFDM-based hybrid precoding compared with fully-digital precoding. The precoders,  $ \bF_{\rm{RF}} $ and  $ \bF_{\rm{BB}}[k] $, are coupled together through the total power constraints, $\sum_{k=1}^{K} || \bF_{\rm{RF}} \bF_{\rm{BB}}[k]||_{\rm{F}}^{2} \leq P_{\rm{tot}} $, where $P_\mathrm{tot}$ is the total transmit power. 

At the receiver, assuming perfect carrier and frequency offset synchronization, the cyclic prefix of length $D$ is first removed from the received signal. The symbols at each subcarrier $k$ are then combined using the $\Nr \times N_\mathrm{S}$ digital combining matrix $\bW[k]$. Note that no hybrid combining is assumed as the number of RF chains at the receiver equals the number of antennas. Let the  $N_\mathrm{RX} \times N_\mathrm{TX}$ matrix $\bH[k]$ denote the channel at subcarrier $k$, the received signal at subcarrier $k$ after processing can be then written as
\begin{equation}
\by[k]=\bW^*[k] \bH[k] \Frf \Fbb[k] s[k]+ \bW^*[k] \bn[k],
\label{eq:processed}
\end{equation}
where  $\bn[k] \sim \mathcal{CN}(\boldsymbol{0}, \sigma_\mathrm{N}^2 \bI)$ is the Gaussian noise vector at the receiver.  

\subsection{Channel Model}
We adopt a geometric channel model to incorporate the wideband and limited scattering characteristics of mmWave channels \cite{Rappaport2013a,Samimi2014,Rappaport2012}. Consider a mmWave channel with $N_{\rm{CH}}$ paths between the BS and mobile user, and let $\rho_{\ell}$, $\tau_\ell$, $\phi_{{\rm{R}},\ell}$, $\theta_{{\rm{R}},\ell}$, $\phi_{{\rm{T}},\ell}$, $\theta_{{\rm{T}},\ell}$ denote the $\ell$th complex path gain, delay, azimuth angle of departure, elevation angle of departure, azimuth angle of arrival, and elevation angle of arrival,  respectively. Let $p(\tau)$ denote a pulse shaping filter for $T_{\rm{s}}$-spaced signaling at $\tau$ seconds. The delay-$d$ MIMO channel matrix can be written as \cite{SchniterSayeed2014,Alkhateeb2015a}
\begin{equation}\label{eq:CIR_model}
\bH[d] = \sum_{p=1}^{N_{\rm{CH}}} \alpha_{p} p( d T_{\rm{s}} - \tau_p ) \ba_{{\rm{R}}} ( \phi_{{\rm{R}},p}, \theta_{{\rm{R}},p})  \ba^{*}_{{\rm{T}}} ( \phi_{{\rm{T}},p}, \theta_{{\rm{T}},p} ), 
\end{equation}
where $\ba_{{\rm{T}}} \left( \phi_{{\rm{T}},p}, \theta_{{\rm{T}},p} \right)$ and $\ba_{{\rm{R}}} \left( \phi_{{\rm{R}},p}, \theta_{{\rm{R}},p} \right)$ represent the transmit and receive array response vectors, which depend on the antenna array type. 
Assuming perfect synchronization, the channel frequency response matrix at each subcarrier $k$ can be expressed as 
\begin{equation}\label{eq:H_k}
\bH[k] = \sum_{p=1}^{N_{\rm{CH}}}  \alpha_{p}  \omega_{\tau_p}[k]  \ba_{{\rm{R}}} ( \phi_{{\rm{R}},p}, \theta_{{\rm{R}},p} )  \ba^{*}_{{\rm{T}}} ( \phi_{{\rm{T}},p}, \theta_{{\rm{T}},p} ),
\end{equation}
where $\omega_{\tau_p}[k] $ is defined as
\begin{equation}\label{eq:omega_def}
\omega_{\tau_p}[k] = \sum_{d=0}^{D-1}  p( d T_{\rm{s}} - \tau_p ) e^{-\frac{j2\pi k  d }{K}}.
\end{equation}
The channel matrix $\bH[k]$ can also be written in a more compact form as
\begin{equation}\label{eq:H_k_matrix}
\bH[k]= \bA_{{\rm{R}}}  \bD[k] \bA^*_{{\rm{T}}},
\end{equation}
where $\bA_\mathrm{R}$ and $\bA_\mathrm{T}$ carry the array response vectors of the transmitter and receiver as 
\begin{equation}\label{eq:A_R_A_T_matrix}
\begin{split}
\bA_{{\rm{R}}}  &=  \begin{bmatrix} \ba_{{\rm{R}}} ( \phi_{{\rm{R}},1}, \theta_{{\rm{R}},1} ) &  \ba_{{\rm{R}}}  (\phi_{{\rm{R}},2}, \theta_{{\rm{R}},2} ) & \cdots & \ba_{{\rm{R}}} ( \phi_{{\rm{R}},N_{\rm{CH}}}, \theta_{{\rm{R}},N_{\rm{CH}}} ) \end{bmatrix} \\
\bA_{{\rm{T}}} &=  \begin{bmatrix} \ba_{{\rm{T}}} ( \phi_{{\rm{T}},1}, \theta_{{\rm{T}},1} ) &  \ba_{{\rm{T}}}  (\phi_{{\rm{T}},2}, \theta_{{\rm{R}},2} ) & \cdots & \ba_{{\rm{T}}} ( \phi_{{\rm{T}},N_{\rm{CH}}}, \theta_{{\rm{T}},N_{\rm{CH}}} ) \end{bmatrix}, 
\end{split}
\end{equation}
and the diagonal matrix $\bD[k]$ equals 
\begin{equation}\label{eq:W_hat_k}
\bD[k] =  \begin{bmatrix}
\alpha_{1}  \omega_{\tau_1}[k]  & \cdots & 0 \\ 
\vdots &  \ddots & \vdots \\
0 & \cdots & \alpha_{N_{\rm{CH}}}  \omega_{\tau_{N_{\rm{CH}}} }[k]
\end{bmatrix}.
\end{equation}
In the next section, we formulate the hybrid precoding design problem, before presenting our solutions in the following sections.

\section{Problem Formulation}\label{sec:Prob}
The objective of this paper is to design the hybrid analog and digital precoders at the BS to maximize the mutual information assuming that the transmit symbol at each subcarrier, $s[k]$, has a Gaussian distribution. This problem can be formulated as 
\begin{align}\label{eq:opt_criterion_capacity}
\left\{\Frf^\star, \left\{\Fbb^\star[k]\right\}_{k=1}^K \right\} = & \arg \hspace{-20pt} \max_{ \Frf, \left\{\Fbb[k]\right\}_{k=1}^K}  \sum_{k=1}^{K}  \log \det \left( \bI + \frac{1}{\sigma_{N}^2} \bH[k] \bF_{\rm{RF}} \bF_{\rm{BB}}[k] \bF_{\rm{BB}}^{*}[k] \bF_{\rm{RF}}^{*} \bH^{*}[k] \right) \nonumber \\
& \mathrm{s.t.} \hspace{50pt} \textrm{   } \sum_{k=1}^{K} || \bF_{\rm{RF}} \bF_{\rm{BB}}[k]||_{\rm{F}}^{2} \leq P_{\rm{tot}},
\end{align}
where the precoders must satisfy a total power constraint. One difficulty in solving \eqref{eq:opt_criterion_capacity} is the coupling between the baseband and RF precoders in the power constraint. Using a change of variable trick, though, and setting $\bF_{\rm{BB}}[k] = \left( \bF_{ \rm{RF}}^{*} \bF_{\rm{RF}} \right)^{-\frac{1}{2}} \hat{\bF}_{\rm{BB}}[k]$ with $\hat{\bF}_{\rm{BB}}[k]$ a dummy variable, the problem in \eqref{eq:opt_criterion_capacity} can be equivalently written as \cite{Alkhateeb2015a}
\begin{align}\label{eq:opt_criterion_equiv}
\left\{\Frf^\star, \left\{\hat{\bF}_\mathrm{BB}^\star[k]\right\}_{k=1}^K \right\} = & \arg \hspace{-20pt} \max_{ \Frf, \left\{\hat{\bF}_\mathrm{BB}[k]\right\}_{k=1}^K  }  \sum_{k=1}^{K}  \log \det \left( \bI + \frac{1}{\sigma_{N}^2} \bH_{\rm{eff}}[k]  \hat{\bF}_{\rm{BB}}[k] \hat{\bF}_{\rm{BB}}^{*}[k]  \bH_{\rm{eff}}^{*}[k] \right) \nonumber \\
& \mathrm{s.t.} \hspace{50pt} \textrm{   } \sum_{k=1}^{K} || \hat{\bF}_{\rm{BB}}[k]||_{\rm{F}}^{2} \leq P_{\rm{tot}},
\end{align}
where $\bH_\mathrm{eff}$ is an effective channel matrix defined as
\begin{equation}\label{eq:H_eff}
\bH_{\rm{eff}}[k]  =  \bH[k]  \bF_{\rm{RF}} (\bF^{*}_{\rm{RF}} \bF_{\rm{RF}})^{-\frac{1}{2}}.
\end{equation}

%
%

If $\bF_{ \rm{RF}}$ is given, and assuming perfect channel knowledge at the transmitter, the digital precoders can be found by using a conventional singular value decomposition (SVD) scheme with respect to the effective channel at each subcarrier. 
Let $\bH_{\rm{eff}}[k] $ be decomposed by SVD as 
\begin{equation}\label{eq:Heff_SVD}
\bH_{\rm{eff}}[k]  =  \bU_{\rm{eff}}[k]  \mathbf{\Lambda}_{\rm{eff}}[k]  \bV_{\rm{eff}}^{*}[k], 
\end{equation}
and let $\bP_{\rm{eff}}[k]$ be a diagonal matrix whose diagonal elements represent the water-filling power control solution with respect to the effective channel singular values. Then, the optimum solution of  $\hat{\bF}_{ \rm{BB}}[k]$ can be represented as
\begin{equation}\label{eq:opt_F_BB_hat}
\hat{\bF}^\star_{\rm{BB}}[k]=\bV_{\rm{eff}}[k] \bP_{\rm{eff}}^{\frac{1}{2}}[k].
\end{equation}
Once the optimal $\hat{\bF}_\mathrm{BB}^\star[k]$ is found, the optimal baseband precoder can be calculated as
\begin{equation}\label{eq:opt_F_BB}
\bF_{\rm{BB}}^\star[k]=(\bF^{*}_{\rm{RF}} \bF_{\rm{RF}})^{-\frac{1}{2}} \hat{\bF}_{\rm{BB}}^\star[k]= (\bF^{*}_{\rm{RF}} \bF_{\rm{RF}})^{-\frac{1}{2}} \bV_{\rm{eff}}[k] \bP_{\rm{eff}}^{\frac{1}{2}}[k].
\end{equation} 

Since the optimal baseband precoding matrices $\bF^\star_{\rm{BB}}[k]$'s depend only on $ \bH[k]$ and  $\bF_{\rm{RF}}$, we can now rewrite the optimization problem in \eqref{eq:opt_criterion_equiv} over $\Frf$ only as
\begin{equation}\label{eq:opt_criterion_equiv_onlyFrf}
\Frf^\star=\arg \max_{ \mathbf{F}_{\rm{RF}}}  \sum_{k=1}^{K}  \sum_{s=1}^{S}   \log \left( 1 + \frac{ \lambda^{2}_{s} \big( \mathbf{H} \left[ k \right] \mathbf{F}_{\rm{RF}} (\bF^{*}_{\rm{RF}} \bF_{\rm{RF}})^{-\frac{1}{2}} \big) p_{s,k} }{\sigma_{N}^2}  \right), 
\end{equation}
where $S=\min(N_{\rm{RF}}, N_{\rm{RX}})$ is the maximum possible number of streams, $\lambda_{s} \left( \bA \right)$ is the $s$-th singular value of $\bA$, and $p_{s,k}$ is the power of the $s$-th stream at the $k$-th subcarrier, which is given by the water-filling power control solution
\begin{equation}\label{eq:WFpower_control}
p_{s,k} = \left( \mu - \frac{\sigma_{N}^2}{  \lambda^{2}_{s} \big( \mathbf{H} \left[ k \right] \mathbf{F}_{\rm{RF}} (\bF^{*}_{\rm{RF}} \bF_{\rm{RF}})^{-\frac{1}{2}} \big)} \right)^{+}, \\
\end{equation}
with $\mu$ satisfying
\begin{equation}\label{eq:WFpower_control_mu}
\sum_{k=1}^{K} \sum_{s=1}^{S} \left( \mu - \frac{\sigma_{N}^2}{  \lambda^{2}_{s} \big( \mathbf{H} \left[ k \right] \mathbf{F}_{\rm{RF}} (\bF^{*}_{\rm{RF}} \bF_{\rm{RF}})^{-\frac{1}{2}} \big)} \right)^{+} = P_{\rm{tot}}. 
\end{equation}
 
Note that the original optimization problem in \eqref{eq:opt_criterion_capacity} is now equivalent to \eqref{eq:opt_criterion_equiv_onlyFrf} where we only need to optimize over $\bF_{\rm{RF}}$. This  problem, though, is non-convex and hard to solve. Therefore, we relax the optimization and instead maximize the sum of the squared singular values of the effective channels. In \sref{subsec:eval_relax}, we will evaluate this relaxation and show that it works well for wideband mmWave channels with practical system and channel parameters. Our relaxed objective is to solve 
\begin{equation}\label{eq:opt_criterion_relaxed}
\Frf^\star=\arg \max_{\mathbf{F}_{\rm{RF}}}  \sum_{k=1}^{K}  \sum_{s=1}^{S} \lambda^{2}_{s} \big( \mathbf{H} [ k ] \mathbf{F}_{\rm{RF}} (\bF^{*}_{\rm{RF}} \bF_{\rm{RF}})^{-\frac{1}{2}}   \big) .
\end{equation}

\noindent Once the optimal RF precoder is found, the water-filling power control is applied with respect to the effective channel singular values associated with $\Frf^\star$. 

It is worth noting here that we have not put any constraints on the implementation of the RF precoders. Typically, the RF precoding is realized using networks of phase shifters with certain hardware limitations, e.g., only constant-modulus and quantized angles may be allowed. These limitations impose additional constraints on the entries of the RF precoding matrix. This will be addressed later in \sref{sec:Fully}, after investigating the more relaxed version in \eqref{eq:opt_criterion_relaxed} with no RF hardware constraints.  

\section{Wideband Hybrid Precoding Design for Fully-connected Architectures} \label{sec:Fully}
In this section, we consider the system model in \figref{fig:WB_HP_architecture} assuming a fully-connected hybrid architecture where each RF chain is connected to all the $\Nt$ antennas. In the following proposition, we derive the structure of the optimal RF precoders that solve  \eqref{eq:opt_criterion_relaxed}. 
\begin{proposition}
Let $\bR=\frac{1}{K} \sum_{k=1}^{K} \mathbf{H}^{*}[k] \mathbf{H}[k]
$ represent a sample covariance matrix of frequency domain channel vectors, with  eigenvalue decomposition $\bR= \bV_{\rm{R}} \mathbf{\Lambda}_{\rm{R}} \bV^{*}_{\rm{R}}$. Let $[ \bV_{\rm{R}}]_{1:N_{\rm{RF}}}$ denote the matrix with the dominant $N_{\rm{RF}}$ eigenvectors of $\bR$. The solution to \eqref{eq:opt_criterion_relaxed} can then be written as 
\begin{equation}
\bF_{\rm{RF}}^\star = \left[\bV_{\rm{R}}\right]_{1:N_{\rm{RF}}} \bA,
\end{equation}
with an arbitrary $N_\mathrm{RF} \times N_\mathrm{RF}$ full rank matrix $\bA$. 
\label{prop:Fully_connected}
\end{proposition}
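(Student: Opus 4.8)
The plan is to turn the matrix optimization in \eqref{eq:opt_criterion_relaxed} into a textbook trace maximization over matrices with orthonormal columns, and then invoke the Ky Fan / Rayleigh--Ritz maximum principle. First I would observe that the objective depends on $\Frf$ only through the matrix $\bQ \bydef \Frf(\bF^{*}_{\rm{RF}}\bF_{\rm{RF}})^{-\frac{1}{2}}$. For any $\Frf$ of full column rank, $\bQ^{*}\bQ = \bI_{N_{\rm{RF}}}$, and conversely every $N_{\rm{TX}}\times N_{\rm{RF}}$ matrix with orthonormal columns arises this way (take $\Frf = \bQ$); moreover a rank-deficient $\Frf$ makes $(\bF^{*}_{\rm{RF}}\bF_{\rm{RF}})^{-\frac{1}{2}}$ undefined and is clearly suboptimal, so we may restrict to full-rank $\Frf$ without loss of generality. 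Hence \eqref{eq:opt_criterion_relaxed} is equivalent to maximizing $\sum_{k=1}^{K}\sum_{s=1}^{S}\lambda_{s}^{2}(\bH[k]\bQ)$ over all $N_{\rm{TX}}\times N_{\rm{RF}}$ matrices $\bQ$ with $\bQ^{*}\bQ = \bI_{N_{\rm{RF}}}$. Since all singular values of $\bH[k]\bQ$ are unchanged when $\bQ$ is right-multiplied by an $N_{\rm{RF}}\times N_{\rm{RF}}$ unitary matrix, this objective in fact depends only on the column span of $\bQ$, which is the source of the ``arbitrary full-rank $\bA$'' freedom in the statement.

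The next step is to collapse the inner sum over $s$. The product $\bH[k]\bQ$ has dimensions $N_{\rm{RX}}\times N_{\rm{RF}}$, hence exactly $S = \min(N_{\rm{RF}}, N_{\rm{RX}})$ singular values (some possibly zero), so $\sum_{s=1}^{S}\lambda_{s}^{2}(\bH[k]\bQ) = \|\bH[k]\bQ\|_{\rm{F}}^{2} = \mathrm{tr}\!\left(\bQ^{*}\bH^{*}[k]\bH[k]\bQ\right)$. Summing over the subcarriers and recalling $\bR = \frac{1}{K}\sum_{k=1}^{K}\bH^{*}[k]\bH[k]$ gives $\sum_{k=1}^{K}\sum_{s=1}^{S}\lambda_{s}^{2}(\bH[k]\bQ) = K\,\mathrm{tr}\!\left(\bQ^{*}\bR\,\bQ\right)$, so the relaxed problem reduces to $\max\{\mathrm{tr}(\bQ^{*}\bR\bQ):\bQ^{*}\bQ = \bI_{N_{\rm{RF}}}\}$.

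Finally I would invoke the Ky Fan / Rayleigh--Ritz theorem for the Hermitian positive semidefinite matrix $\bR$: the maximum of $\mathrm{tr}(\bQ^{*}\bR\bQ)$ over $\bQ^{*}\bQ = \bI_{N_{\rm{RF}}}$ equals the sum of the $N_{\rm{RF}}$ largest eigenvalues of $\bR$, and it is attained exactly when the columns of $\bQ$ span the dominant $N_{\rm{RF}}$-dimensional eigenspace, i.e. $\bQ = [\bV_{\rm{R}}]_{1:N_{\rm{RF}}}\bU$ for some $N_{\rm{RF}}\times N_{\rm{RF}}$ unitary $\bU$. Pulling this back through $\bQ = \Frf(\bF^{*}_{\rm{RF}}\bF_{\rm{RF}})^{-\frac{1}{2}}$: for $\Frf = [\bV_{\rm{R}}]_{1:N_{\rm{RF}}}\bA$ with any full-rank $\bA$, orthonormality of the columns of $[\bV_{\rm{R}}]_{1:N_{\rm{RF}}}$ gives $\bF^{*}_{\rm{RF}}\bF_{\rm{RF}} = \bA^{*}\bA$, hence $\bQ = [\bV_{\rm{R}}]_{1:N_{\rm{RF}}}\,\bA(\bA^{*}\bA)^{-\frac{1}{2}}$ whose second factor is unitary by the polar decomposition; so every such $\Frf$ is optimal, and conversely any optimal $\Frf$ has this form, up to the usual eigenvector ambiguity when the $N_{\rm{RF}}$-th and $(N_{\rm{RF}}{+}1)$-th eigenvalues of $\bR$ coincide.

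I expect the only delicate point to be the identity $\sum_{s=1}^{S}\lambda_{s}^{2}(\bH[k]\bQ) = \|\bH[k]\bQ\|_{\rm{F}}^{2}$: it is what makes the sum-of-squared-singular-values objective coincide with a trace (and thus linear in $\bQ\bQ^{*}$), and it hinges precisely on $S$ being the full singular-value count $\min(N_{\rm{RF}}, N_{\rm{RX}})$ of the $N_{\rm{RX}}\times N_{\rm{RF}}$ matrix $\bH[k]\bQ$. Everything after that reduction is the standard maximum principle for Rayleigh quotients, so no further obstacles are anticipated.
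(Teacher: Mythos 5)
Your proposal is correct and follows essentially the same route as the paper: both reduce the objective to $K\,\mathrm{tr}\left(\bQ^{*}\bR\,\bQ\right)$ over orthonormal-column matrices (the paper reaches the same point via the SVD $\bF_{\rm{RF}}=\bU_{\rm{RF}}\mathbf{\Lambda}_{\rm{RF}}\bV^{*}_{\rm{RF}}$, noting $\bF_{\rm{RF}}(\bF^{*}_{\rm{RF}}\bF_{\rm{RF}})^{-1}\bF^{*}_{\rm{RF}}=\bU_{\rm{RF}}\bU^{*}_{\rm{RF}}$) and then invoke the dominant-eigenvector maximizer. Your treatment of the pull-back through the polar decomposition and of the equivalence $\bQ^{*}\bQ=\bI$ is slightly more explicit than the paper's, but the substance is identical.
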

\begin{proof}
Let $\bF_{\rm{RF}}$ be decomposed by SVD as $\bF_{\rm{RF}}= \bU_{\rm{RF}} \mathbf{\Lambda}_{\rm{RF}} \bV^{*}_{\rm{RF}}$. Then, we note that the objective function in \eqref{eq:opt_criterion_relaxed} can be also written as 
\begin{align}\label{eq:opt_objective_fun}
\sum_{k=1}^{K}  \sum_{s=1}^{S} \lambda^{2}_{s} \big( \mathbf{H} [k] \mathbf{F}_{\rm{RF}}  (\bF^{*}_{\rm{RF}} \bF_{\rm{RF}})^{-\frac{1}{2}}  \big)  &  \\
& \hspace{-90pt} = \sum_{k=1}^{K} ||\mathbf{H}[k]\mathbf{F}_{\rm{RF}} (\bF^{*}_{\rm{RF}} \bF_{\rm{RF}})^{-\frac{1}{2}} ||^2_{F} \\
& \hspace{-90pt}=  \mathrm{Tr} \left( (\bF^{*}_{\rm{RF}} \bF_{\rm{RF}})^{-\frac{1}{2}} \mathbf{F}_{\rm{RF}} ^*   \left( \sum_{k=1}^{K} \mathbf{H}^{*}[k] \mathbf{H}[k] \right) \mathbf{F}_{\rm{RF}} (\bF^{*}_{\rm{RF}} \bF_{\rm{RF}})^{-\frac{1}{2}}  \right) \\
&\hspace{-90pt}=  \mathrm{Tr} \left( \mathbf{F}_{\rm{RF}}  (\bF^{*}_{\rm{RF}} \bF_{\rm{RF}})^{-1} \mathbf{F}_{\rm{RF}} ^*   \left( \sum_{k=1}^{K} \mathbf{H}^{*}[k] \mathbf{H}[k] \right) \right) \\
&\hspace{-90pt}=  \mathrm{Tr} \left( K  \bU_{\rm{RF}}  \bU^{*}_{\rm{RF}}  \bR   \right) \\
&\hspace{-90pt}= K || \bR^{\frac{1}{2}} \bU_{\rm{RF}}||^2_{F} \label{eq:opt_2}.
\end{align}
Given the eigenvalue decomposition $\bR= \bV_{\rm{R}} \mathbf{\Lambda}_{\rm{R}} \bV^{*}_{\rm{R}}$, the singular vector matrix $\bU_{\rm{RF}}^\star$ that maximizes the objective function in \eqref{eq:opt_criterion_relaxed}, or equivalently \eqref{eq:opt_2}, can now be directly obtained as 
\begin{equation}\label{eq:opt_U_RF}
\bU_{\rm{RF}}^\star= [ \bV_{\rm{R}}]_{1:N_{\rm{RF}}} \bU_{\rm{A}},
\end{equation}
where $\bU_{\rm{A}}$ is an arbitrary $N_{\rm{RF}} \times N_{\rm{RF}}$ unitary matrix that represents the unitary invariance property of the precoding matrix. Given $\bU_\mathrm{RF}^\star$, the optimal RF precoding matrix  $\bF_{\rm{RF}}$ that solves \eqref{eq:opt_criterion_relaxed} can be expressed as
\begin{align}\label{eq:opt_F_RF}
\bF_{\rm{RF}}^\star &=  [ \bV_{\rm{R}}]_{1:N_{\rm{RF}}} \bU_{\rm{A}} \mathbf{\Lambda}_{\rm{RF}} \bV^{*}_{\rm{RF}} \\
&= [ \bV_{\rm{R}}]_{1:N_{\rm{RF}}} \bA,
\end{align}
where $\bA$ is an arbitrary $N_{\rm{RF}} \times N_{\rm{RF}}$ matrix with full rank.
\end{proof}

Next, we show that the solution in Proposition \ref{prop:Fully_connected} achieves the same spectral efficiency as the fully-digital solution to \eqref{eq:opt_criterion_capacity} if the number of RF chains is larger than or equal to the number of channel paths, i.e., $N_\mathrm{RF}\geq N_{\rm{CH}}$. First, we rewrite the sample covariance matrix $\bR$ as
\begin{align}\label{eq:Rmat_using_ch_model} 
\bR & = \frac{1}{K} \sum_{k=1}^{K} \bH^*[k] \bH[k] \\
&= \frac{1}{K}  \sum_{k=1}^{K}  \left( \bA_{{\rm{T}}} \bD^*[k] \bA^*_{{\rm{R}}} \bA_{{\rm{R}}} \bD[k] \bA^*_{{\rm{T}}}  \right) \\
&=\bA_{{\rm{T}}} \left(  \frac{1}{K}  \sum_{k=1}^{K}  \bD^*[k] \bA^*_{{\rm{R}}} \bA_{{\rm{R}}} \bD[k] \right)   \bA^*_{{\rm{T}}} \label{eq:Cov_simple}.
\end{align}
Note that $\bA_{{\rm{T}}}$ is an $N_{\rm{TX}} \times N_{\rm{CH}}$ matrix, and $\sum_{k=1}^{K}  \bD^*[k] \bA^*_{{\rm{R}}} \bA_{{\rm{R}}} \bD[k] $ is an $N_{\rm{CH}} \times N_{\rm{CH}}$ matrix. As a result, the rank of the matrix $\bR$ is at most $\min(N_{\rm{TX}}, N_{\rm{CH}})$.
As mmWave systems will employ large antenna arrays \cite{Roh2014,HeathJr2015}, and mmWave channels are expected to be sparse \cite{Rappaport2013a,Rappaport2013}, the number of channel paths will likely be less than the number of antennas, i.e., $N_{\rm{CH}} < N_{\rm{TX}}$. In this case, the rank of the channel covariance equals the number of paths, i.e., the matrix $\bR$ becomes rank-deficient. Based on that, the channel matrix at subcarrier $k$ can be represented as
\begin{equation}\label{eq:Hk_using_VR}
\bH[k]= \hat{\bH}[k]  \bV^*_{\rm{R}}, 
\end{equation} 
where $\bV_{\rm{R}} $ is the $N_{\rm{TX}} \times N_{\rm{CH}}$ right singular matrix of $\bR$, and $\hat{\bH}[k]$ is an $N_{\rm{RX}} \times N_{\rm{CH}}$ matrix. Given that, the fully digital precoding solution that solves the optimization problem in \eqref{eq:opt_criterion_capacity} is given by the SVD solution. Let $\bH[k]=\bU[k] \mathbf{\Lambda}[k] \bV^{*}[k]$ define the SVD of the channel matrix $\bH[k]$, then  the fully-digital optimal precoder equals $\bV[k]$, which can be written as
\begin{equation}\label{eq:Vk_opt}
\bV[k] = \bV_{\rm{R}} \hat{\bH}^{*}[k] \bU[k] \mathbf{\Lambda}^{-1}[k].
\end{equation}

\noindent Assuming that the number of RF chains is at least as large as the number of paths, i.e., $N_\mathrm{RF} \geq N_{\rm{CH}}$, then the matrix $\bV[k]$ in \eqref{eq:Vk_opt} can also be rewritten in terms of the derived baseband and RF precoders in Proposition \ref{prop:Fully_connected} as
\begin{equation}\label{eq:Vk_opt2}
\bV[k] = \bF_{\rm{RF}} \bF_{\rm{BB}}[k],
\end{equation}
with $\bF_{\rm{RF}}  = \bV_{\rm{R}} \bA$ and $ \bF_{\rm{BB}}[k] = \bA^{-1} \hat{\bH}^{*}[k] \bU[k] \mathbf{\Lambda}^{-1}[k]$. This means that the derived hybrid precoding solution in Proposition \ref{prop:Fully_connected} represents an optimal solution for \eqref{eq:opt_criterion_capacity}, and achieves the spectral efficiency of the fully-digital architecture when $N_\mathrm{RF} \geq N_{\rm{CH}}$. 

To account for the RF constraints, we approximate the unconstrained RF precoder design in \eqref{eq:opt_F_RF} by the constrained precoder $\hat{\bF}_\mathrm{RF}$ that solves 
\begin{equation} \label{eq:Const_RF}
\hat{\bF}_\mathrm{RF}=\arg\min_{\bX, \left|[\bX]_{m,n}\right|=1}\left\|\bX-\bF_\mathrm{RF}\right\|^2_\mathrm{F},
\end{equation}
which is known to provide a good approximation \cite{ElAyach2014,Alkhateeb2015a}.
\noindent The solution of \eqref{eq:Const_RF} is given by $[\hat{\bF}_\mathrm{RF}]_{m,n}=e^ {j \measuredangle \left( \left[\bF_{\mathrm{RF}}\right]_{m,n} \right)}$, where $\measuredangle( \alpha )$ denotes the phase of a complex number $\alpha$. 
Thanks to the design of the optimal unconstrained RF precoder in \eqref{eq:opt_F_RF}, which depends on the channel singular vectors, and because these singular vectors take a DFT structure for uniform arrays as $N \to \infty$ \cite{Adhikary2013,ElAyach2012a,Alkhateeb2015a}, 
this simple solution can be a reasonable substitute for the unknown optimal solution which needs further study. This will be shown by numerical simulations in \sref{sec:Results}.

\section{Wideband Hybrid Precoding Design for Fixed Subarray Architectures} \label{sec:SubArray_Fixed}
In this section, we consider the hybrid architecture in \sref{sec:Model}, but assuming a subarray structure \cite{ElAyach2014,Han2015}. This means that every RF chain is connected to only a subset of the antennas with $N_\mathrm{sub}=\frac{\Nt}{N_\mathrm{RF}}$ elements. We assume that $ \Nt$ is a multiple of $N_\mathrm{RF}$. Let the antenna indexes be $\{ 1,\cdots,N_\mathrm{TX} \}$ and  $\mathcal{S}_r$ denote the partitioned subset of antenna indexes connected to the $r$-th RF chain such as
\begin{equation}\label{eq:Fixed_subset_ex}
\begin{matrix}
\mathcal{S}_1=\{1,\cdots,N_\mathrm{sub} \} \\
\mathcal{S}_2=\{N_\mathrm{sub} +1,\cdots,2N_\mathrm{sub} \} \\
 \vdots \\
 \mathcal{S}_{N_\mathrm{RF}}=\{(N_\mathrm{RF}-1)N_\mathrm{sub}+1,\cdots,N_\mathrm{RF} N_\mathrm{sub} \}.
\end{matrix}
\end{equation}
With this architecture, the analog RF precoding matrix, $\bF_{\rm{RF}} $, has the form of a block diagonal matrix as
\begin{equation}\label{eq:F_RF_subarray}
\bF_{\rm{RF}} = \begin{bmatrix}
\bff_{{\rm{RF}},\mathcal{S}_1} &  \cdots & \mathbf{0} \\ 
\vdots & \ddots & \vdots \\
\mathbf{0} & \cdots &  \bff_{{\rm{RF}},\mathcal{S}_{N_{\rm{RF}}}}
\end{bmatrix},
\end{equation}
where $\bff_{{\rm{RF}},\mathcal{S}_r}$ is an $N_{{\rm{sub}}} \times 1$ analog beamforming vector associated with the $r$-th RF chain. This is a distinct property compared to the fully-connected case whose analog precoding matrix takes the form 
$\bF_{\rm{RF}} = \begin{bmatrix} \bff_{{\rm{RF}},1} & \bff_{{\rm{RF}},2}& ...& \bff_{{\rm{RF}},N_{\rm{RF}}}\end{bmatrix}$, 
with  $\bff_{{\rm{RF}},r}$ an $N_{{\rm{TX}}} \times 1$ analog beamforming vector associated with the $r$-th RF chain. Given this subarray architecture, the overall $N_{{\rm{RX}}} \times N_{{\rm{TX}}}$ channel matrix can be expressed using each subarray channel matrix as
\begin{equation}\label{eq:Hk_subarray}
\bH[k] = \begin{bmatrix}
\bH_{\mathcal{S}_1}[k] &  \bH_{\mathcal{S}_2}[k] & \cdots & \bH_{\mathcal{S}_{N_{\rm{RF}}}}[k]
\end{bmatrix},
\end{equation}
where $\bH_{\mathcal{S}_r}[k] $ is the $N_{{\rm{RX}}} \times N_{{\rm{sub}}}$ channel matrix of the  $r-$th subarray. 
Next, we present Proposition \ref{prop:Sub_Arrays} that obtains the structure of the optimal hybrid precoders solving \eqref{eq:opt_criterion_relaxed} under the subarray architecture. 
\begin{proposition} \label{prop:Sub_Arrays}
The $\Nt \times N_\mathrm{RF}$ RF precoder $\bF_\mathrm{RF}$ that solves \eqref{eq:opt_criterion_relaxed} under the subarray hybrid analog/digital architecture is given by $\bF^\star_\mathrm{RF}=\mathrm{blkdiag}\left(\bff^\star_{\mathrm{RF},\mathcal{S}_1}, ..., \bff^\star_{\mathrm{RF}, \mathcal{S}_{N_\mathrm{RF}}}\right)$, with
\begin{equation}
\bff^\star_{{\rm{RF}},\mathcal{S}_r}=\alpha_{r} \mathbf{v}_{\bR_{\mathcal{S}_r},1} , \;\; {\rm{for}} \;\; r={1,\cdots,N_{\rm{RF}}},
\end{equation} 
where $\alpha_{r}$ is an arbitrary complex value, and $\mathbf{v}_{\bR_{\mathcal{S}_r},1} $ is the largest singular vector of the covariance matrix $\bR_{\mathcal{S}_r}$, which is associated with the $r$-th subarray channel matrix and is defined as
\begin{equation} \label{eq:Cov_Subarray}
\bR_{\mathcal{S}_r}=\frac{1}{K} \sum_{k=1}^{K} \mathbf{H}^{*}_{\mathcal{S}_r}[k] \mathbf{H}_{\mathcal{S}_r}[k], \;\; {\rm{for}} \;\;r={1,\cdots,N_{\rm{RF}}}.
\end{equation}
\end{proposition}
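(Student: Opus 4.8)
The plan is to reduce the block-diagonal RF precoder optimization to a collection of independent Rayleigh-quotient maximizations, one per subarray, and read off the dominant eigenvectors.

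First I would reuse the algebraic identity derived in the proof of Proposition~\ref{prop:Fully_connected}, which did not use the fully-connected structure and so carries over verbatim: the relaxed objective in \eqref{eq:opt_criterion_relaxed} equals
$$\mathrm{Tr}\!\left( \bF_{\rm{RF}} (\bF^{*}_{\rm{RF}} \bF_{\rm{RF}})^{-1} \bF^{*}_{\rm{RF}} \Big( \sum_{k=1}^{K} \bH^{*}[k]\bH[k]\Big)\right) = K\,\mathrm{Tr}\!\left( \bP_{\rm{RF}}\, \bR\right),$$
where $\bP_{\rm{RF}} = \bF_{\rm{RF}} (\bF^{*}_{\rm{RF}} \bF_{\rm{RF}})^{-1} \bF^{*}_{\rm{RF}}$ is the orthogonal projector onto the column space of $\bF_{\rm{RF}}$. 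Implicitly we assume $\bF_{\rm{RF}}$ has full column rank so the inverse exists, which for the block-diagonal form \eqref{eq:F_RF_subarray} just means every $\bff_{{\rm{RF}},\mathcal{S}_r}\neq \mathbf{0}$. As in Proposition~\ref{prop:Fully_connected}, the step $\sum_{s=1}^{S}\lambda_s^2(\cdot)=\|\cdot\|^2_F$ is valid here because $\bH[k]\bF_{\rm{RF}}(\bF^{*}_{\rm{RF}}\bF_{\rm{RF}})^{-1/2}$ has at most $S=\min(N_{\rm{RF}},N_{\rm{RX}})$ nonzero singular values.

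Second I would exploit the block-diagonal structure \eqref{eq:F_RF_subarray}. Because each block of $\bF_{\rm{RF}}$ is a single column $\bff_{{\rm{RF}},\mathcal{S}_r}\in\mathbb{C}^{N_{\rm{sub}}}$, the Gram matrix $\bF^{*}_{\rm{RF}} \bF_{\rm{RF}} = \mathrm{diag}\big(\|\bff_{{\rm{RF}},\mathcal{S}_1}\|^2,\dots,\|\bff_{{\rm{RF}},\mathcal{S}_{N_{\rm{RF}}}}\|^2\big)$ is diagonal, hence $\bP_{\rm{RF}}$ is block-diagonal with $r$-th block equal to the rank-one projector $\bff_{{\rm{RF}},\mathcal{S}_r}\bff^{*}_{{\rm{RF}},\mathcal{S}_r}/\|\bff_{{\rm{RF}},\mathcal{S}_r}\|^2$ acting on the coordinates of subarray $r$. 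Partitioning $\bR$ conformally with the subarrays, its $(r,r)$ diagonal block is exactly $\frac{1}{K}\sum_{k=1}^{K} \bH^{*}_{\mathcal{S}_r}[k]\bH_{\mathcal{S}_r}[k] = \bR_{\mathcal{S}_r}$ from \eqref{eq:Cov_Subarray}, and since $\bP_{\rm{RF}}$ is block-diagonal the trace only sees these diagonal blocks, so
$$K\,\mathrm{Tr}\!\left( \bP_{\rm{RF}}\, \bR\right) = K\sum_{r=1}^{N_{\rm{RF}}} \frac{\bff^{*}_{{\rm{RF}},\mathcal{S}_r}\, \bR_{\mathcal{S}_r}\, \bff_{{\rm{RF}},\mathcal{S}_r}}{\bff^{*}_{{\rm{RF}},\mathcal{S}_r}\,\bff_{{\rm{RF}},\mathcal{S}_r}}.$$
Third, the vectors $\bff_{{\rm{RF}},\mathcal{S}_r}$ now range freely and independently over nonzero elements of $\mathbb{C}^{N_{\rm{sub}}}$ — the only structural constraint, the block pattern, has already been absorbed — so the sum is maximized term by term. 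Each term is a Rayleigh quotient of the Hermitian positive semidefinite matrix $\bR_{\mathcal{S}_r}$, whose maximum $\lambda_{\max}(\bR_{\mathcal{S}_r})$ is attained exactly when $\bff_{{\rm{RF}},\mathcal{S}_r}$ is a dominant eigenvector of $\bR_{\mathcal{S}_r}$; invariance of the quotient under complex scaling gives the arbitrary $\alpha_r$, yielding the claimed $\bff^\star_{{\rm{RF}},\mathcal{S}_r}=\alpha_r\mathbf{v}_{\bR_{\mathcal{S}_r},1}$.

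There is no deep obstacle here; the only real work is the bookkeeping in the second step — verifying that the block-diagonal pattern of $\bF_{\rm{RF}}$ propagates to $\bP_{\rm{RF}}$ and that the trace against $\bR$ therefore decouples across subarrays — and being careful that the projector formula requires $\bF_{\rm{RF}}$ to have full column rank, which is automatic since any zero block makes $\bF_{\rm{RF}}$ rank-deficient and renders the $(\cdot)^{-1/2}$ in the objective undefined.
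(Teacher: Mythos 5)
Your proposal is correct and arrives at the paper's own argument: both reduce the relaxed objective to the decoupled sum $\sum_{r} K\,\bff^{*}_{{\rm{RF}},\mathcal{S}_r}\bR_{\mathcal{S}_r}\bff_{{\rm{RF}},\mathcal{S}_r}/\|\bff_{{\rm{RF}},\mathcal{S}_r}\|^2$ and maximize each Rayleigh quotient independently by the dominant eigenvector of $\bR_{\mathcal{S}_r}$, up to the arbitrary scaling $\alpha_r$. The only difference is cosmetic: you reach the decoupled sum via the projector identity $K\,\mathrm{Tr}(\bP_{\rm{RF}}\bR)$ and block-diagonality of $\bP_{\rm{RF}}$, whereas the paper writes out the columns of $\bH_{\rm{eff}}[k]$ explicitly and sums their squared norms.
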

\begin{proof} 
From \eqref{eq:F_RF_subarray}, $\left(\bF^{*}_{\rm{RF}} \bF_{\rm{RF}}\right)^{-\frac{1}{2}}$ has a form of a diagonal matrix as 
\begin{equation}\label{eq:F_RF_hat_F_RF}
(\bF^{*}_{\rm{RF}} \bF_{\rm{RF}})^{-\frac{1}{2}}  = \begin{bmatrix}
|\bff_{{\rm{RF}},\mathcal{S}_1}|^{-1} & \cdots & 0 \\ 
\vdots & \ddots & \vdots \\
0 & \cdots &  |\bff_{{\rm{RF}},\mathcal{S}_{N_{\rm{RF}}}}|^{-1}
\end{bmatrix}.
\end{equation}

\noindent This property of $\left(\bF^{*}_{\rm{RF}} \bF_{\rm{RF}}\right)^{-\frac{1}{2}}$ implies that the effective channel for the $k-$th subcarrier $\bH_{\rm{eff}} [k] $ in \eqref{eq:H_eff} can be written as  
\begin{equation}\label{eq:F_eff_sub}
\begin{split} 
\bH_{\rm{eff}} [k] &= \mathbf{H} \left[ k \right] \mathbf{F}_{\rm{RF}}  \left( \bF^{*}_{\rm{RF}} \bF_{\rm{RF}} \right)^{-\frac{1}{2}} \\
&= \begin{bmatrix}
\frac{\bH_{\mathcal{S}_1}[k] \bff_{{\rm{RF}},\mathcal{S}_1}}{|\bff_{{\rm{RF}},\mathcal{S}_1}|} &  \frac{\bH_{\mathcal{S}_2}[k] \bff_{{\rm{RF}},\mathcal{S}_2}}{|\bff_{{\rm{RF}},\mathcal{S}_2}|} & \cdots & \frac{\bH_{\mathcal{S}_{N_{\rm{RF}}}}[k] \bff_{{\rm{RF}},\mathcal{S}_{N_{\rm{RF}}}}}{|\bff_{{\rm{RF}},\mathcal{S}_{N_{\rm{RF}}}}|} \end{bmatrix}.
\end{split} 
\end{equation}
From \eqref{eq:Cov_Subarray} and \eqref{eq:F_eff_sub}, the objective function of the optimization problem in \eqref{eq:opt_criterion_relaxed} can be written as
\begin{align}\label{eq:objective_fun_sub}
\sum_{k=1}^{K}  \sum_{s=1}^{S} \lambda^{2}_{s} \big( \mathbf{H} \left[ k \right] \mathbf{F}_{\rm{RF}}  (\bF^{*}_{\rm{RF}} \bF_{\rm{RF}})^{-\frac{1}{2}}  \big) &= \sum_{k=1}^{K} ||\bH_{\rm{eff}} [k]  ||^2_{F} \\
&=\sum_{k=1}^{K}  \sum_{r=1}^{N_{\rm{RF}}} \frac{|\bH_{\mathcal{S}_r}[k] \bff_{{\rm{RF}},\mathcal{S}_r}|^2}{|\bff_{{\rm{RF}},\mathcal{S}_r}|^2}  \\
&=  \sum_{r=1}^{N_{\rm{RF}}} \frac{ K \bff_{{\rm{RF}},\mathcal{S}_r}^* \bR_{\mathcal{S}_r} \bff_{{\rm{RF}},\mathcal{S}_r}}{|\bff_{{\rm{RF}},\mathcal{S}_r}|^2},
\end{align}
where the third equality comes from \eqref{eq:Cov_Subarray}. 
The maximum value of the objective function in \eqref{eq:objective_fun_sub} can then be written as
\begin{align}\label{eq:max_val_sub}
\max_{\bF_{\rm{RF}}} \sum_{k=1}^{K}  \sum_{s=1}^{S} \lambda^{2}_{s} \big( \mathbf{H} \left[ k \right] \mathbf{F}_{\rm{RF}}  (\bF^{*}_{\rm{RF}} \bF_{\rm{RF}})^{-\frac{1}{2}}  \big) 
&=  \max_{\bff_{\rm{RF},\mathcal{S}_1},\dots,\bff_{{\rm{RF}},\mathcal{S}_{N_{\rm{RF}}}}}  \sum_{r=1}^{N_{\rm{RF}}} \frac{ K \bff_{{\rm{RF}},\mathcal{S}_r}^* \bR_{\mathcal{S}_r} \bff_{{\rm{RF}},\mathcal{S}_r}}{|\bff_{{\rm{RF}},\mathcal{S}_r}|^2} \\
&=  K \sum_{r=1}^{N_{\rm{RF}}} \lambda_{1} \left(  \bR_{\mathcal{S}_r} \right) ,\label{eq:Cov_Sing}
\end{align} 
where $\lambda_{1} \left(  \bA \right) $ denotes the largest singular value of a matrix $\bA $. This maximum value is achieved when the analog beamforming vector for each RF chain $r$ has the structure
\begin{equation}\label{eq:f_rf_opt_sub}
\bff_{{\rm{RF}},\mathcal{S}_r}^\star=\alpha_{r} \mathbf{v}_{\bR_{\mathcal{S}_r},{1}} , \;\; {\rm{for}} \;\; r={1,\cdots,N_{\rm{RF}}},
\end{equation}
where $\alpha_{r}$ is an arbitrary complex value, and $\mathbf{v}_{\bR_{r},{1}} $ is the largest singular vector of $\bR_{r}$. 
\end{proof}

Note that the maximum value of the objective function in \eqref{eq:Cov_Sing} is the sum of the largest singular values of $N_{\rm{RF}}$ submatrices, $\bR_{\mathcal{S}_1},...,\bR_{\mathcal{S}_{N_{\rm{RF}}}}$. This is a distinguishing feature from the fully-connected case where the maximum value is the sum of largest $N_{\rm{RF}}$ singular values of the total matrix, $\bR$, as
\begin{equation}\label{eq:max_obj_val_fully}
\max_{\bF_{\rm{RF}}} \sum_{k=1}^{K}  \sum_{s=1}^{S} \lambda^{2}_{s} \big( \mathbf{H} \left[ k \right] \mathbf{F}_{\rm{RF}}  (\bF^{*}_{\rm{RF}} \bF_{\rm{RF}})^{-\frac{1}{2}}  \big) 
=  K \sum_{r=1}^{N_{\rm{RF}}} \lambda_{r} \left(  \bR \right).
\end{equation} 
While the value of \eqref{eq:max_obj_val_fully} is constant if $\bR$ is given, the value of \eqref{eq:Cov_Sing} depends on the configuration of the submatrices,   $\bR_{\mathcal{S}_1},...,\bR_{\mathcal{S}_{N_{\rm{RF}}}}$. This motivates a dynamic subarray technique, which will be explained in the next section.
\section{Wideband Hybrid Precoding Design for Dynamic Subarray Architectures} \label{sec:SubArray_Dynamic}

\begin{figure}[t]
	\centering
	\subfigure[center][{Fully-connected structure}]{
		\includegraphics[width=.31\columnwidth]{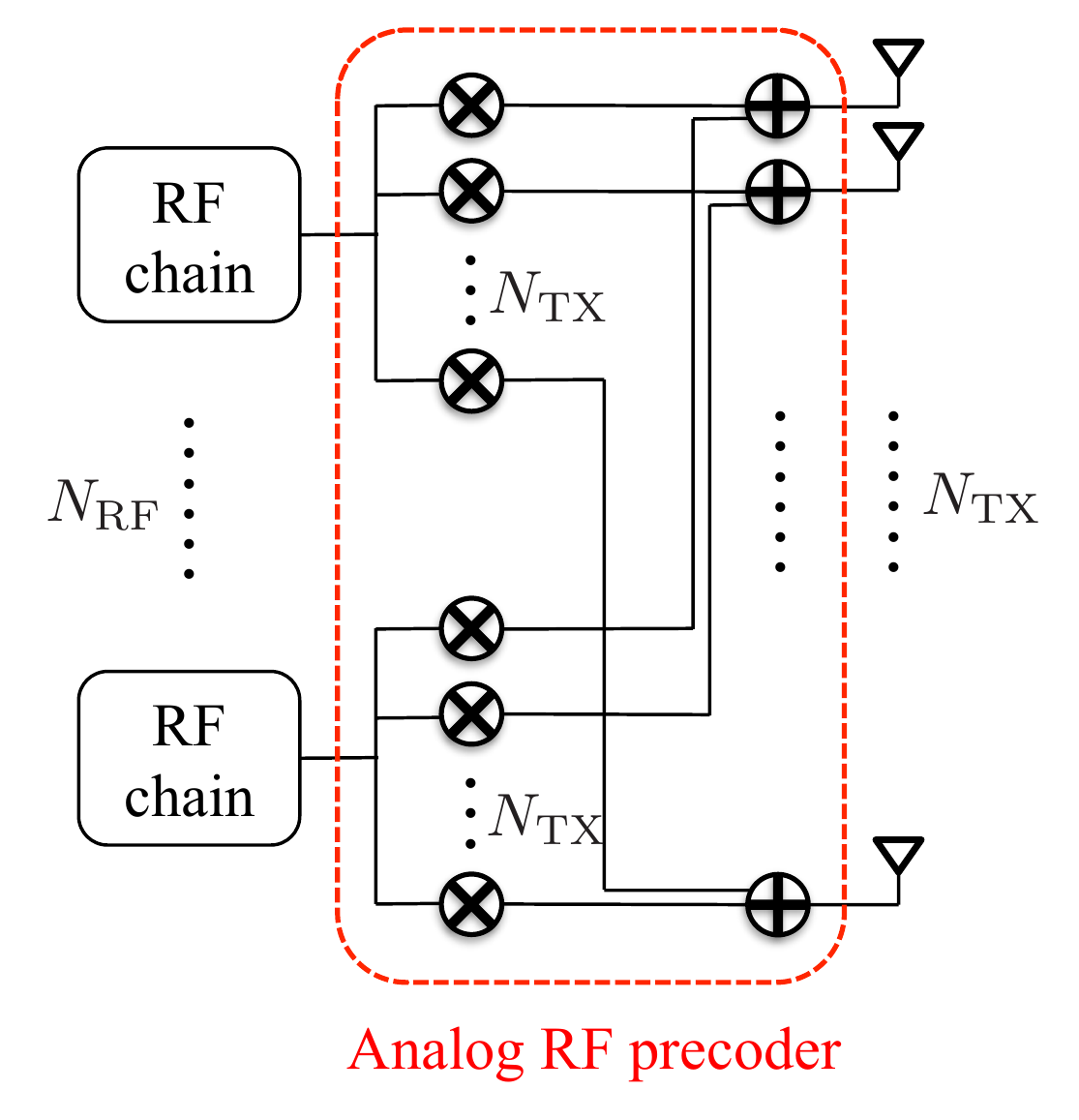}
		\label{fig:Fig_Hybrid_structure_fully_connected}}
	\subfigure[center][{Subarray structure (fixed)}]{
		\includegraphics[width=.31\columnwidth]{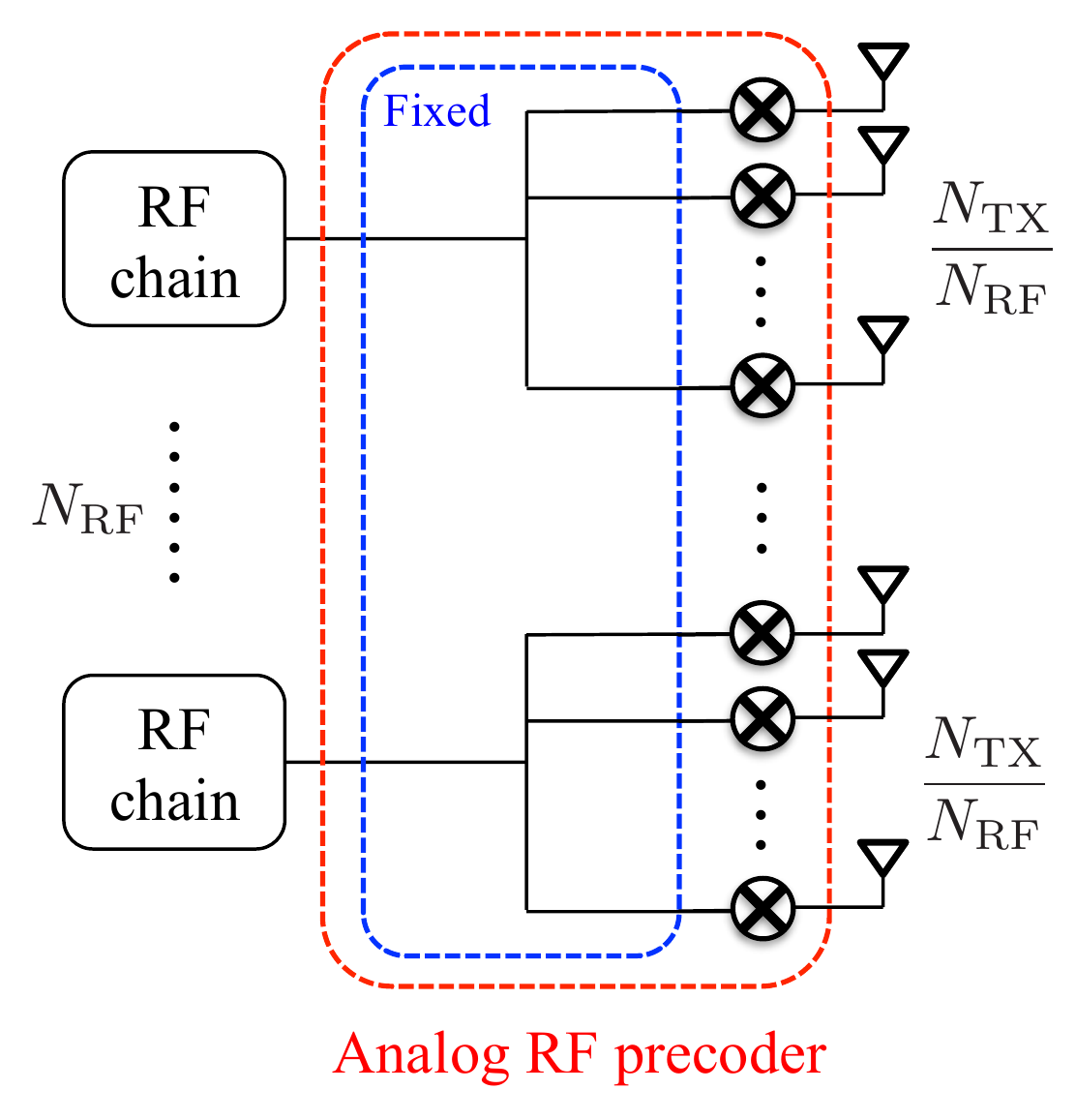}
		\label{fig:Fig_Hybrid_structure_subarray_fixed2}}
	\subfigure[center][{Subarray structure (dynamic)}]{
		\includegraphics[width=.31\columnwidth]{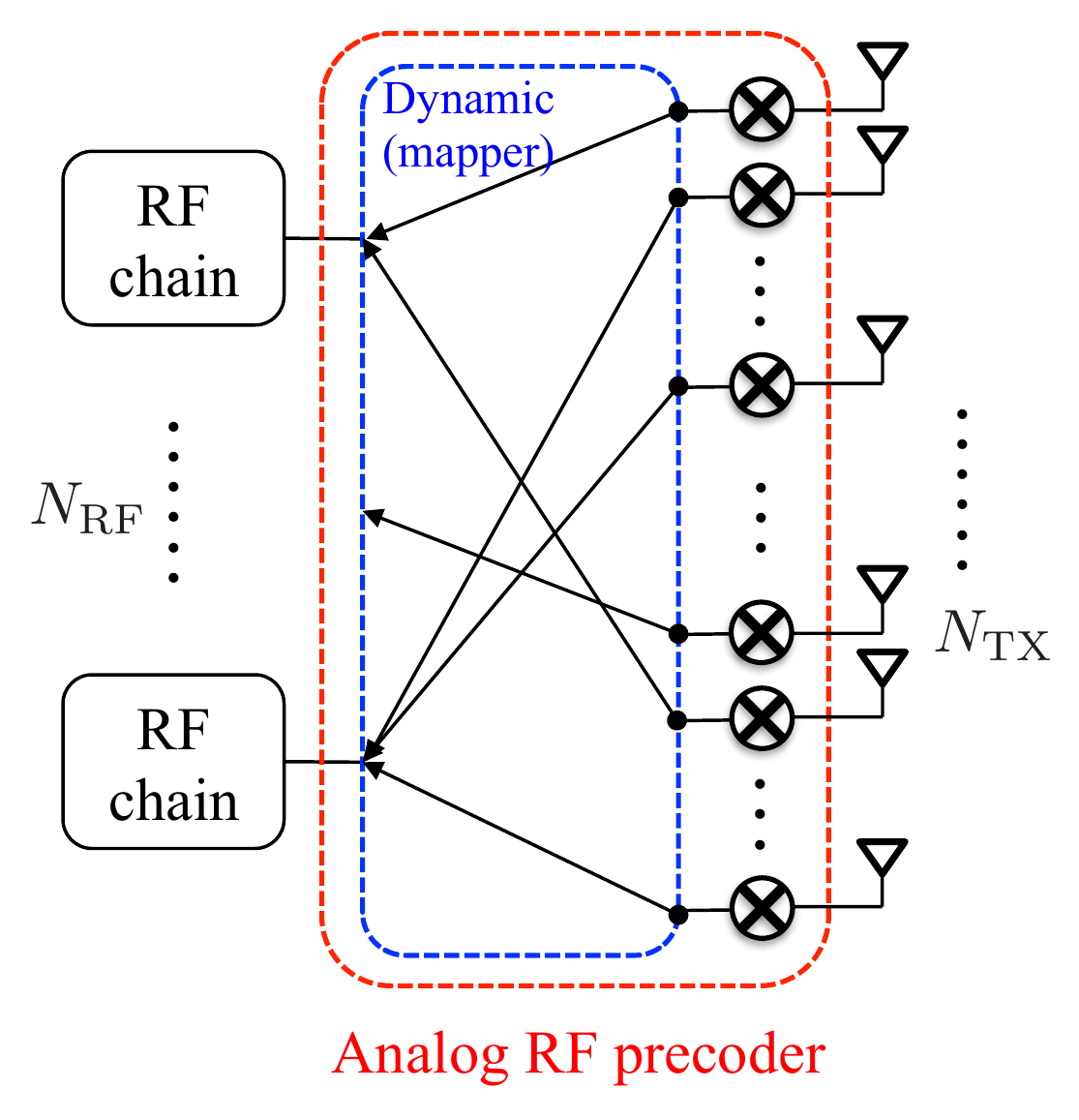}
		\label{fig:Fig_Hybrid_structure_subarray_dynamic}}
	\caption{Hybrid precoding structure with regard to the analog RF precoder type}
	\label{fig:Fig_Hybrid_precoding_structure}
\end{figure}

The subarray hybrid precoding architecture adopted in \sref{sec:SubArray_Fixed} is the conventional one discussed in prior work \cite{Han2015,ElAyach2013}, where each RF chain is connected to a fixed set of adjacent antenna elements. \sref{sec:SubArray_Fixed} shows that the optimal value of the relaxed mutual information objective function in \eqref{eq:max_val_sub} depends on the sum of the largest singular values of the sample covariance matrices associated to these subarrays. As these largest singular values rely on the selected antennas in each of these subsets, then the question that arises is how much gain can be obtained if these subarrays are dynamically adapted to the long-term channel conditions? Implementing the switch matrix required for the dynamic subarrays shown in \figref{fig:Fig_Hybrid_structure_subarray_dynamic} is a topic for future work. The objective of this section is to explore the potential gains in terms of the system spectral efficiency.

Now, we define the dynamic subarray problem. We want an algorithm to partition a set of $N_{\rm{TX}}$ antennas into $N_{\rm{RF}}$ non-empty subsets to maximize the sum of the largest singular values of the submatrices in \eqref{eq:Cov_Sing}. 
Note that each antenna should be included only once in one of these subsets, and the union of all these subsets should be the total set of all antenna indexes, $\{1,...,N_\mathrm{TX}\}$. Contrary to the fixed subarray architectures, each subset $\mathcal{S}_{r}$ can have different cardinalities in the dynamic subarray structure. Then, this dynamic subarray partitioning problem to maximize the objective function in \eqref{eq:opt_criterion_relaxed} can be formulated as
\begin{align}\label{eq:opt_criterion_dynamic_subarray}
\left\{\mathcal{S}^\star_r\right\}_{r=1}^{N_\mathrm{RF}}=&\arg \max_{ \mathcal{S}_{1},\dots,\mathcal{S}_{N_{\rm{RF}}}}  \sum_{r=1}^{N_{\rm{RF}}} \lambda_{1}\left(\bR_{\mathcal{S}_{r}} \right) \\
& \textrm{s.t.} \;\; \bigcup_{r=1}^{N_{\rm{RF}}}\mathcal{S}_{r}=  \left\{1, \cdots,N_{{\rm{TX}}} \right\}, \; \mathcal{S}_{i} \cap \mathcal{S}_{j} =\emptyset \; \textrm{for} \; i \neq j, \;\; \left\vert{\mathcal{S}_{r}}\right\vert > 0 \;\; \forall r \nonumber .
\end{align}

The problem in \eqref{eq:opt_criterion_dynamic_subarray} is a combinatorial optimization problem for which finding the optimal solution requires an exhaustive search for all possible cases. The total number of combinations is known as Stirling number of the second kind and is given by
\begin{equation}\label{eq:number_of_cases_Striling_number}
\frac{1}{(N_{\rm{RF}})!}\sum_{k=0}^{N_{\rm{RF}}} (-1)^{N_{\rm{RF}}-k} \binom{N_{\rm{RF}}}{k} k^{N_{\rm{TX}}},
\end{equation}
which is a large number even for a small number of antennas and RF chains. For example, this number becomes $1.7 \times 10^{8}$ even for 16 transmit antennas and 4 RF chains. One possible suboptimal solution is to assume that all the subsets have the same size, $|\mathcal{S}_r| = N_{\rm{TX}}/N_{\rm{RF}}, \forall r$. Even in this case, though, the total number of combinations is given by $\frac{ \left(N_{\rm{TX}} \right)! }{ \left(\left(\frac{N_{\rm{TX}}}{N_{\rm{RF}}}\right)!\right)^{N_{\rm{RF}}} \left(N_{\rm{RF}}\right)! } $
, which is still large, e.g., $2.6 \times 10^{6}$ even for 16 transmit antennas and 4 RF chains.

The objective of this section is to develop a low-complexity yet reasonable solution to the problem in \eqref{eq:opt_criterion_dynamic_subarray}. First, we note that in many cases, calculating  the largest singular values, which is required in \eqref{eq:opt_criterion_dynamic_subarray}, does not have a closed form expression and must be calculated numerically, e.g. through an iterative algorithm \cite{Wolkowicz1980}. Having a closed-form expression of the largest singular value in \eqref{eq:opt_criterion_dynamic_subarray} is important for our subarrays selection problem. To address this challenge, we propose to use a normalized Minkowski $\ell_1$-norm \cite{Lutkepohl2007}, which gives a good approximation of the largest singular value as will be discussed in Proposition \ref{prop:bounds}. Given the overall channel covariance matrix $\bR$, the approximate largest singular value of the subset $\mathcal{S}$ is defined as 
\begin{equation}\label{eq:approx_singular_value}
\hat{\lambda}_1 \left( \bR_{\mathcal{S}} \right) \triangleq \frac{1}{|\mathcal{S}|}\sum_{i=1}^{|\mathcal{S}|}\sum_{j=1}^{| \mathcal{S}|} |[\bR_{\mathcal{S}}]_{i,j} | = \frac{1}{|\mathcal{S}|}\sum_{i \in \mathcal{S}}\sum_{j \in \mathcal{S}} | [\bR]_{i,j} | ,
\end{equation}
where $\sum_{i=1}^{|\mathcal{S}|}\sum_{j=1}^{| \mathcal{S}|} |[\bR_{\mathcal{S}}]_{i,j} |$ is known as the Minkowski $\ell_1$-norm of the matrix $\bR_{\mathcal{S}}$\cite{Lutkepohl2007}. 

This approximate value has two useful properties. First, this value lies between the existing lower and upper bounds on the exact value of the largest singular value as will be proved shortly in Proposition \ref{prop:bounds}. These lower and upper bounds on the largest singular value of $\bR_{\mathcal{S}}$ (with real eigenvalues) are given by \cite{Wolkowicz1980}
\begin{equation}\label{eq:upper_and_lower_bound}
 \lambda_{1, \rm{LB}} \left( \bR_{\mathcal{S}} \right) \leq \lambda_{1} \left( \bR_{\mathcal{S}} \right) \leq  \lambda_{1, \rm{UB}} \left( \bR_{\mathcal{S}} \right),
 \end{equation}
with the lower and upper bounds
\begin{equation}
\begin{split}
 \lambda_{1, \rm{LB}} \left( \bR_{\mathcal{S}} \right)&= m+\frac{s}{(|\mathcal{S}|-1)^{\frac{1}{2}} }\\
 \lambda_{1, \rm{UB}} \left( \bR_{\mathcal{S}} \right)&= m+s(|\mathcal{S}|-1)^{\frac{1}{2}},
\end{split}\label{eq:upper_and_lower_bound_def}
\end{equation}
where
\begin{equation}\label{eq:m_and_s}
m=\frac{{\rm{Tr}}(\bR_{\mathcal{S}})}{|\mathcal{S}|}, \;\;\; s=\left( \frac{{\rm{Tr}}(\bR_{\mathcal{S}}^2)}{|\mathcal{S}|} - m^2 \right) ^{\frac{1}{2}}.
\end{equation}
\noindent In the next proposition, we prove that the approximate largest singular value in \eqref{eq:approx_singular_value} also lies between the existing lower and upper bounds in \eqref{eq:upper_and_lower_bound_def}. 
\begin{proposition}
The approximate value of the largest singular value in \eqref{eq:approx_singular_value} has the same lower and upper bound as those of the exact value of the largest singular value if the matrix is Hermitian with identical diagonal elements.   
\begin{equation}\label{eq:lower_and_upper_bound_approx}
 \lambda_{1, \rm{LB}} \left( \bR_{\mathcal{S}} \right) \leq \hat{\lambda}_{1} \left( \bR_{\mathcal{S}} \right) \leq  \lambda_{1, \rm{UB}} \left( \bR_{\mathcal{S}} \right).
\end{equation}
\label{prop:bounds}
\end{proposition}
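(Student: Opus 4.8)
The plan is to reduce the two-sided bound to an elementary comparison between the $\ell_1$ and $\ell_2$ aggregates of the off-diagonal entries of $\bR_{\mathcal{S}}$, and then dispatch each side with a single inequality. Write $n=|\mathcal{S}|$ and $r_{ij}=[\bR_{\mathcal{S}}]_{i,j}$. Since $\bR_{\mathcal{S}}$ is a principal submatrix of the Hermitian positive semidefinite matrix $\bR$, its diagonal entries are real and nonnegative, and by hypothesis they all equal a common value $d\ge 0$; hence $m={\rm Tr}(\bR_{\mathcal{S}})/n=d$. Using $r_{ji}=\overline{r_{ij}}$ gives ${\rm Tr}(\bR_{\mathcal{S}}^2)=\sum_{i,j} r_{ij}r_{ji}=\sum_{i,j}|r_{ij}|^2$, so that
\[
s^2=\frac{1}{n}\Big(\sum_{i\ne j}|r_{ij}|^2+nd^2\Big)-d^2=\frac{1}{n}\sum_{i\ne j}|r_{ij}|^2 .
\]
On the other hand, from \eqref{eq:approx_singular_value}, $\hat{\lambda}_1(\bR_{\mathcal{S}})=\frac{1}{n}\big(nd+\sum_{i\ne j}|r_{ij}|\big)=m+\frac{1}{n}\sum_{i\ne j}|r_{ij}|$. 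Thus both desired inequalities become statements about $\frac{1}{n}\sum_{i\ne j}|r_{ij}|$ versus $s=\big(\frac{1}{n}\sum_{i\ne j}|r_{ij}|^2\big)^{1/2}$.

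For the upper bound I would apply Cauchy--Schwarz to the $n(n-1)$ off-diagonal entries: $\sum_{i\ne j}|r_{ij}|\le\sqrt{n(n-1)}\,\big(\sum_{i\ne j}|r_{ij}|^2\big)^{1/2}=\sqrt{n(n-1)}\cdot\sqrt{n}\,s=n\sqrt{n-1}\,s$. Dividing by $n$ yields $\hat{\lambda}_1(\bR_{\mathcal{S}})-m\le s\sqrt{n-1}$, i.e. $\hat{\lambda}_1(\bR_{\mathcal{S}})\le\lambda_{1,\mathrm{UB}}(\bR_{\mathcal{S}})$.

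The lower bound is where the Hermitian structure must genuinely be used, and this is the only delicate point: a bare Cauchy--Schwarz bounds an $\ell_1$ sum by its $\ell_2$ norm from above, which is the wrong direction. Instead I would exploit that off-diagonal magnitudes occur in equal pairs, $|r_{ij}|=|r_{ji}|$. Setting $b_{ij}:=|r_{ij}|$ for $i<j$, we have $\sum_{i\ne j}|r_{ij}|=2\sum_{i<j}b_{ij}$ and $\sum_{i\ne j}|r_{ij}|^2=2\sum_{i<j}b_{ij}^2=ns^2$. Since the $b_{ij}$ are nonnegative, $\big(\sum_{i<j}b_{ij}\big)^2\ge\sum_{i<j}b_{ij}^2=\tfrac{n}{2}s^2$, whence $\hat{\lambda}_1(\bR_{\mathcal{S}})-m=\frac{2}{n}\sum_{i<j}b_{ij}\ge\frac{2}{n}\sqrt{\tfrac{n}{2}}\,s=s\sqrt{2/n}$. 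Finally $\sqrt{2/n}\ge 1/\sqrt{n-1}$ because $2(n-1)\ge n$ for $n\ge 2$, giving $\hat{\lambda}_1(\bR_{\mathcal{S}})-m\ge s/\sqrt{n-1}$, i.e. $\hat{\lambda}_1(\bR_{\mathcal{S}})\ge\lambda_{1,\mathrm{LB}}(\bR_{\mathcal{S}})$; the degenerate case $n=1$ makes $s=0$ and collapses both bounds to $m=\hat{\lambda}_1$, so nothing is lost. Apart from this pairing observation, the argument is routine bookkeeping on the trace identities.
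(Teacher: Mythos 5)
Your proof is correct and follows essentially the same route as the paper's: after reducing $m$, $s$, and $\hat{\lambda}_1$ to sums over the upper-triangular off-diagonal magnitudes, you obtain the upper bound from Cauchy--Schwarz and the lower bound from the superadditivity of squares of nonnegative numbers combined with $2(n-1)\geq n$, which is exactly the paper's argument (the paper just normalizes the diagonal to $1$ and phrases the two inequalities as ratios). Your explicit remark that the Hermitian pairing $|r_{ij}|=|r_{ji}|$ supplies the factor of $2$ needed for the lower bound is a nice clarification of a step the paper leaves implicit, but it is not a different proof.
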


\begin{proof}[Proof] See Appendix \ref{app:prop_bounds}.

\end{proof}

Note that the channel covariance matrix is a Hermitian matrix and its diagonal elements tend to be identical if all the antennas are located in the same base station because the path loss term is common to all antennas.   

The second property of the approximate largest singular value is that this approximate value is a tight lower bound of the exact value in the exponential correlation model case.
Even though this correlation model cannot perfectly describe the characteristics of $\bR$ matrix in the geometric channel model, this can provide an insight to how close the approximate value in \eqref{eq:approx_singular_value} is to the exact value, due to its analytical tractability. The spatial channel covariance matrix in the exponential correlation model is 
\begin{equation}\label{eq:Rmat_exponential_corr}
\bR_{\mathcal{S}} = \begin{bmatrix}
1 & \rho &  \cdots & \rho^{n-1} \\ 
\rho^{*} & 1& \cdots & \rho^{n-2}  \\
\vdots  & \vdots &  \ddots & \vdots \\
(\rho^{*})^{n-1} &(\rho^{*})^{n-2} & \cdots & 1 
\end{bmatrix},
\end{equation}
where $\rho$ is a complex value whose amplitude is less than or equal to 1. The tight lower bound of the largest singular value in this exponential correlation model is known as \cite{Choi2014} 
\begin{equation}\label{eq:lower_bound_exponential_corr}
\lambda_{1} \left( \bR_{\mathcal{S}} \right) \geq \lambda_{1,\rm{LB(exp)}} \left( \bR_{\mathcal{S}} \right),  
\end{equation}
where
\begin{equation}\label{eq:lower_bound_def_exponential_corr}
\lambda_{1,\rm{LB(exp)}} \left( \bR_{\mathcal{S}} \right) = \frac{1+|\rho|}{1-|\rho|}-\frac{2|\rho| \left( 1- |\rho|^{|\mathcal{S}|}  \right) }{|\mathcal{S}| \left( 1-|\rho| \right)^2} .
\end{equation}

In the next proposition, we show that the approximate largest singular value in \eqref{eq:approx_singular_value} can be regarded as a tight lower bound of the exact largest singular value in the exponential correlation model case.
\begin{proposition}
The approximate value of the largest singular value in \eqref{eq:approx_singular_value} is the same as the tight lower bound of the exact value of the largest singular value if the matrix is modeled as the exponential correlation matrix.
\begin{equation}\label{eq:approx_exponential_corr}
\hat{\lambda}_{1} \left( \bR_{\mathcal{S}} \right)  = \lambda_{1,\rm{LB(exp)}} \left( \bR_{\mathcal{S}} \right) . 
\end{equation}
\end{proposition}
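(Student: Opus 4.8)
The plan is to prove the identity \eqref{eq:approx_exponential_corr} by direct computation of both sides. The left-hand side $\hat{\lambda}_1(\bR_{\mathcal{S}})$ is defined in \eqref{eq:approx_singular_value} as $\frac{1}{|\mathcal{S}|}\sum_{i=1}^{|\mathcal{S}|}\sum_{j=1}^{|\mathcal{S}|}|[\bR_{\mathcal{S}}]_{i,j}|$, and the right-hand side $\lambda_{1,\rm{LB(exp)}}(\bR_{\mathcal{S}})$ is given explicitly in \eqref{eq:lower_bound_def_exponential_corr}. So this amounts to verifying that a double sum of powers of $|\rho|$ collapses to the stated closed form.

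First, I would write $n = |\mathcal{S}|$ for brevity and observe that for the exponential correlation matrix in \eqref{eq:Rmat_exponential_corr}, $|[\bR_{\mathcal{S}}]_{i,j}| = |\rho|^{|i-j|}$, since the $(i,j)$ entry is $\rho^{i-j}$ (or its conjugate) and taking magnitudes removes the phase. Hence $\hat{\lambda}_1(\bR_{\mathcal{S}}) = \frac{1}{n}\sum_{i=1}^{n}\sum_{j=1}^{n}|\rho|^{|i-j|}$. The next step is to evaluate this double sum. By grouping terms according to the value $d = |i-j|$, the diagonal ($d=0$) contributes $n$ terms equal to $1$, and for each $d$ with $1 \le d \le n-1$ there are exactly $2(n-d)$ ordered pairs $(i,j)$ with $|i-j| = d$. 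Therefore
\begin{equation}\label{eq:double_sum_grouped}
\sum_{i=1}^{n}\sum_{j=1}^{n}|\rho|^{|i-j|} = n + 2\sum_{d=1}^{n-1}(n-d)|\rho|^{d}.
\end{equation}
I would then split the sum $\sum_{d=1}^{n-1}(n-d)|\rho|^d = n\sum_{d=1}^{n-1}|\rho|^d - \sum_{d=1}^{n-1}d|\rho|^d$ and apply the standard finite geometric series identity together with its derivative (the arithmetic-geometric sum), namely $\sum_{d=1}^{n-1}|\rho|^d = \frac{|\rho| - |\rho|^n}{1-|\rho|}$ and $\sum_{d=1}^{n-1}d|\rho|^d = \frac{|\rho|(1-n|\rho|^{n-1}+(n-1)|\rho|^n)}{(1-|\rho|)^2}$.

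After substituting these into \eqref{eq:double_sum_grouped}, dividing by $n$, and simplifying, the expression should reduce algebraically to $\frac{1+|\rho|}{1-|\rho|} - \frac{2|\rho|(1-|\rho|^n)}{n(1-|\rho|)^2}$, which is exactly \eqref{eq:lower_bound_def_exponential_corr}. The only real obstacle here is the bookkeeping in the algebraic simplification: one must carefully combine the $\frac{1}{1-|\rho|}$ and $\frac{1}{(1-|\rho|)^2}$ terms over a common denominator and confirm that the $n$-independent part telescopes to $\frac{1+|\rho|}{1-|\rho|}$ while the remaining $n$-dependent terms collapse to the single $\frac{2|\rho|(1-|\rho|^n)}{n(1-|\rho|)^2}$ term. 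This is routine but error-prone, so I would double-check it for a small case such as $n=2$ or $n=3$ against both closed forms. Since the identity holds for every $n \ge 1$ (the $n=1$ case giving $\hat{\lambda}_1 = 1 = \lambda_{1,\rm{LB(exp)}}$ trivially), no restriction on $|\mathcal{S}|$ is needed beyond $|\rho| < 1$ for the geometric sums, and the case $|\rho|=1$ follows by continuity.
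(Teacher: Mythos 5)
Your proposal is correct and follows essentially the same route as the paper: both compute the Minkowski $\ell_1$-norm of the exponential correlation matrix directly, reduce the double sum over $|[\bR_{\mathcal{S}}]_{i,j}| = |\rho|^{|i-j|}$ to $n + 2\sum_{d=1}^{n-1}(n-d)|\rho|^d$ (the paper writes this same quantity as $n + 2\sum_{i=1}^{n-1}\sum_{j=1}^{i}|\rho|^j$), and evaluate it via the geometric and arithmetic-geometric series to recover \eqref{eq:lower_bound_def_exponential_corr}. Your algebra checks out, so no changes are needed.
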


\begin{proof}[Proof]
When $\bR_{\mathcal{S}}$ is modeled as an exponential correlation model as in  \eqref{eq:Rmat_exponential_corr}, the approximate value in \eqref{eq:approx_singular_value} can be calculated as
\begin{equation}\label{eq:approx_singular_value_exp}
\begin{split}
\hat{\lambda}_{1} \left( \bR_{\mathcal{S}} \right) &= \frac{1}{ |\mathcal{S}|}\sum_{i=1}^{ |\mathcal{S}|}\sum_{j=1}^{ |\mathcal{S}|} |[\bR_{ \mathcal{S}}]_{i,j} | \\
&=  \frac{1}{|\mathcal{S}|} \left( |\mathcal{S}|+2\sum^{|\mathcal{S}|-1}_{i=1}\sum^{i}_{j=1} |\rho|^{j} \right) \\
&= \frac{1+|\rho|}{1-|\rho|}-\frac{2|\rho| \left( 1- |\rho|^{|\mathcal{S}|} \right) }{|\mathcal{S}| \left( 1-|\rho| \right)^2} ,
\end{split}
\end{equation}
which is equal to $\lambda_{1,\rm{LB(exp)}} \left( \bR_{\mathcal{S}} \right) $.
\end{proof}

\begin{algorithm}\label{alg:proposed}
\caption{Dynamic subarray partitioning}
\begin{algorithmic}
\State Input:  $\bR$, $N_{\rm{RF}}$, $N_{\rm{TX}}$
\State $\mathcal{S}_{0}=\{1,\dots,N_{\rm{RF}} \}$,   $n_{\rm{sel}}=0$
\State Sort $|[\bR]_{i,j}| $ for $1 \leq i<j \leq N_{\rm{TX}} $ in descending order
\State $\left( |[\bR]_{i_{1},j_{1}}| \geq \cdots \geq |[\bR]_{i_{k},j_{k}}|  \geq \cdots \geq |[\bR]_{i_{K},j_{K}}|, \;\; K=\frac{N_{\rm{TX}}(N_{\rm{TX}}-1)}{2}, \;\;  1 \leq i_k < j_k \leq N_{\rm{TX}} \right)$

\For {$k =1:K $}

\If {$i_{k}, j_{k} \in \mathcal{S}_{0}$}
\If {$n_{\rm{sel}} < N_{\rm{RF}}$}
\State $n_{\rm{sel}}  \gets n_{\rm{sel}} +1 $, $\mathcal{S}_{n_{\rm{sel}}}  \gets \{ i_{k}, j_{k} \}$, $ \mathcal{S}_{0}  \gets \mathcal{S}_{0} \setminus \{ i_{k}, j_{k} \}$
\Else
\State $\hat{r}=\arg\max_{r \in \{1,\dots,N_{\rm{RF}} \}} \left( f_{\bR,N_{\rm{RF}}} \left( \mathcal{S}_{r}  \cup \{ i_{k}, j_{k} \} ,n_{\rm{sel}}, r \right) - f_{\bR,N_{\rm{RF}}} \left( \mathcal{S}_{r} ,n_{\rm{sel}},r \right) \right)$
\State  $\mathcal{S}_{\hat{r}} \gets \mathcal{S}_{\hat{r}}  \cup \{ i_{k}, j_{k} \} $,  $\mathcal{S}_{0} \gets \mathcal{S}_{0} \setminus \{ i_{k}, j_{k} \} $
\EndIf

\ElsIf {$i_{k} \in \mathcal{S}_{m}, \;\; j_{k} \in \mathcal{S}_{l}$ for some $m,l \in \{0,1,\dots,n_{\rm{sel}} \}$ and $m \neq l$}
\State $\mu_{\rm{current}} =  f_{\bR,N_{\rm{RF}}} \left( \mathcal{S}_{m}  ,n_{\rm{sel}},m \right) +  f_{\bR,N_{\rm{RF}}} \left( \mathcal{S}_{l} ,n_{\rm{sel}},l  \right) $
\State $\mu_{{\rm{new}},j} =  f_{\bR,N_{\rm{RF}}} \left( \mathcal{S}_{m}  \cup \{ j_{k} \} ,n_{\rm{sel}},m \right) +  f_{\bR,N_{\rm{RF}}} \left(\mathcal{S}_{l}  \setminus \{ j_{k} \} ,n_{\rm{sel}},l \right) $
\State $\mu_{{\rm{new}},i}  =  f_{\bR,N_{\rm{RF}}} \left( \mathcal{S}_{m}  \setminus \{ i_{k} \} ,n_{\rm{sel}},m \right) +  f_{\bR,N_{\rm{RF}}} \left( \mathcal{S}_{l}  \cup \{ i_{k} \} ,n_{\rm{sel}},l \right) $
\If {$\mu_{{\rm{new}},j}>\mu_{{\rm{new}},i}$, $ \mu_{{\rm{new}},j}>\mu_{\rm{current}}$, and $ m \neq 0$}
\State $\mathcal{S}_{m} \gets \mathcal{S}_{m}  \cup \{j_{k} \}$, $\mathcal{S}_{l} \gets \mathcal{S}_{l} \setminus \{ j_{k} \} $
\ElsIf{$\mu_{{\rm{new}},i}>\mu_{{\rm{new}},j}$, $ \mu_{{\rm{new}},i}>\mu_{\rm{current}}$, and $l \neq 0$}
\State $\mathcal{S}_{m} \gets \mathcal{S}_{m}  \setminus \{i_{k} \}$, $\mathcal{S}_{l} \gets \mathcal{S}_{l} \cup \{ i_{k} \} $
\EndIf

\EndIf
\EndFor 

\State Output: $\mathcal{S}_{1},\cdots,\mathcal{S}_{N_{\rm{RF}}}$
\end{algorithmic}
\end{algorithm}

We propose a practical algorithm using this approximate value of the largest singular value instead of the exact one. At the initial stage, the absolute values in the upper triangular part of $\bR$ matrix are sorted in descending order. Then, according to the sorted order, the following process is performed repeatedly.  If the selected element at each iteration stage is $|[\bR]_{i,j}|$, then the algorithm checks whether $i$-th antenna and $j$-th antenna are in the same subset or not. If they are in different subsets, the algorithm tries relocating one antenna to the subset that the other antenna belongs to, and calculates the metric, which is defined as the sum of the proposed approximate largest singular values of submatrices. Note that only at most two subsets can be changed at each stage while other subsets remain unchanged. Therefore, the singular values of other submatrices need not be recalculated, and thus the metric at each stage can be simplified as the sum of the two singular values. If the newly calculated metric is larger than the current metric, then the algorithm decides to relocate the antenna, and otherwise decides to maintain the current status. The pseudo code of the details in the proposed algorithm is shown in Algorithm 1. In Algorithm 1, the function $f_{\bR,N_{\rm{RF}}}(.)$ is defined as
\begin{equation}\label{f_fun_def}
f_{\bR,N_{\rm{RF}}} \left( \mathcal{S},n_{\rm{sel}},r \right) \triangleq  \Bigg\{ \begin{array}{ll} {0}, \;\; \textrm{if} \;\; |\mathcal{S}|=0 \;\; \textrm{or} \;\; \left\{ {n_{\rm{sel}}=N_{\rm{RF}}\;\; \textrm{and} \;\; r=0}\right \} \\ { \frac{1}{\left\vert{\mathcal{S}}\right\vert} \sum_{i \in \mathcal{S}}  \sum_{j \in \mathcal{S}} |[\bR]_{i,j}|}, \;\; \textrm{otherwise} \end{array},
\end{equation}
which indicates the approximate singular value of the covariance matrix of the antenna subset.

\section{Simulation Results} \label{sec:Results}

%

In this section, we first evaluate the performance of the proposed wideband hybrid precoding design in a mmWave frequency selective channel, and then present simulation results to demonstrate the performance of the proposed dynamic subarray algorithm with hybrid architectures.

In the simulations, we consider the channel model in  \sref{sec:Model}. The channel is modeled as a clustered channel where each cluster is composed of multiple subrays. The distributions of the paths' delay and azimuth/elevation angles are similar to that in the 3GPP 3D-MIMO channel model \cite{TR36873} and WINNER II SCM channel model \cite{WINNER2}. Considering multiple rays per cluster, we can rewrite the channel model in \eqref{eq:H_k} as 
\begin{equation}\label{eq:H_k_model_modif}
\bH[k] = \sum_{c=1}^{N_{\rm{cluster}}} \sum_{r=1}^{N_{\rm{subray}}}  \alpha_{c,r} \omega_{ \tau_{c,r}}[k]   \ba_{{\rm{R}}} ( \phi_{{\rm{R}},c,r}, \theta_{{\rm{R}},c,r} )  \ba^{*}_{{\rm{T}}} ( \phi_{{\rm{T}},c,r}, \theta_{{\rm{T}},c,r} ), 
\end{equation}

\begin{figure}[t]
	\centering
	\subfigure[center][{IID Rayleigh channel model}]{
		\includegraphics[width=.63\columnwidth]{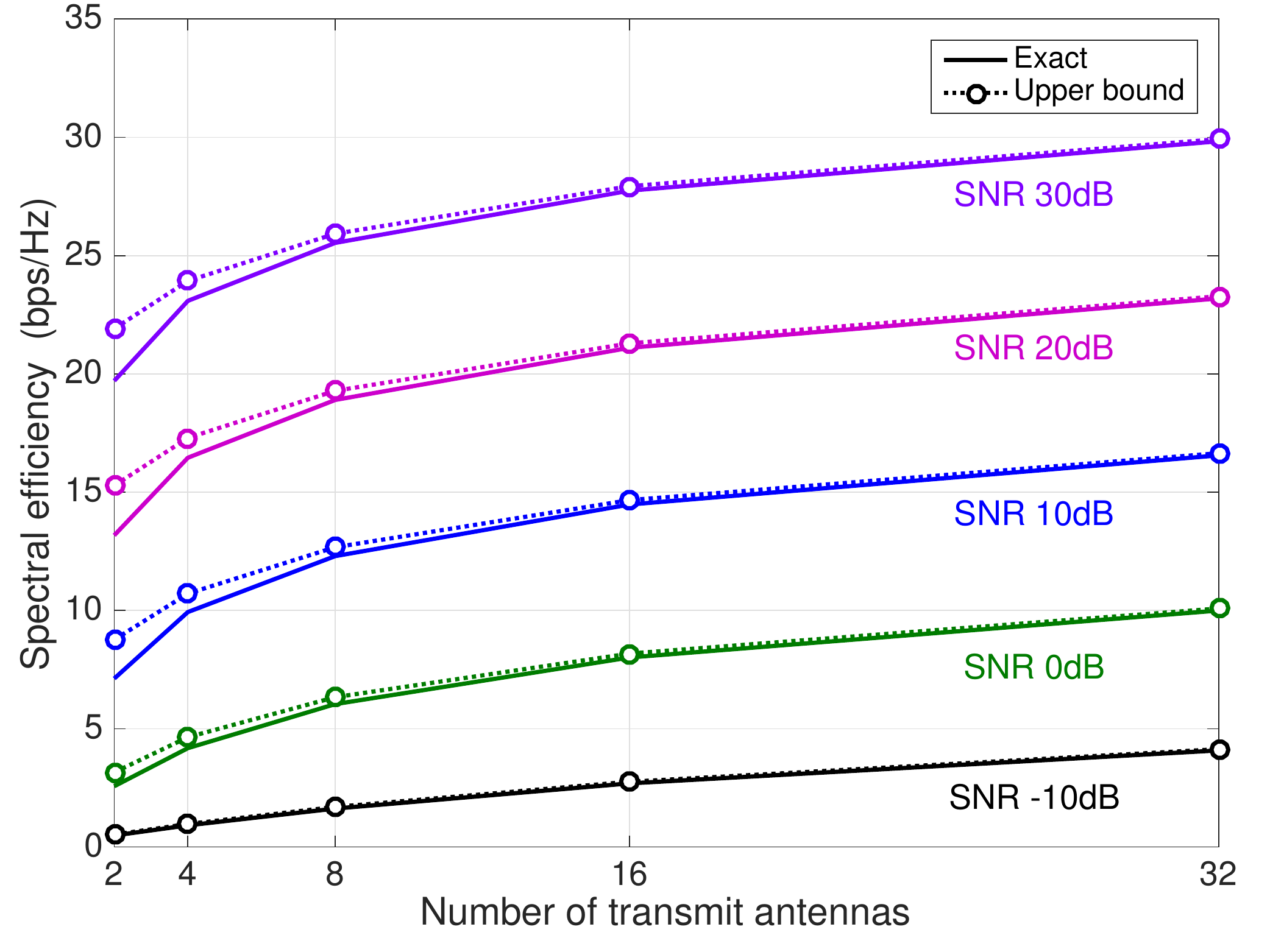}
		\label{fig:Fig_Approx_ErgodicCapacity_IIDch}}
	\subfigure[center][{mmWave channel model (8 clusters, 10 subrays)}]{
		\includegraphics[width=.63\columnwidth]{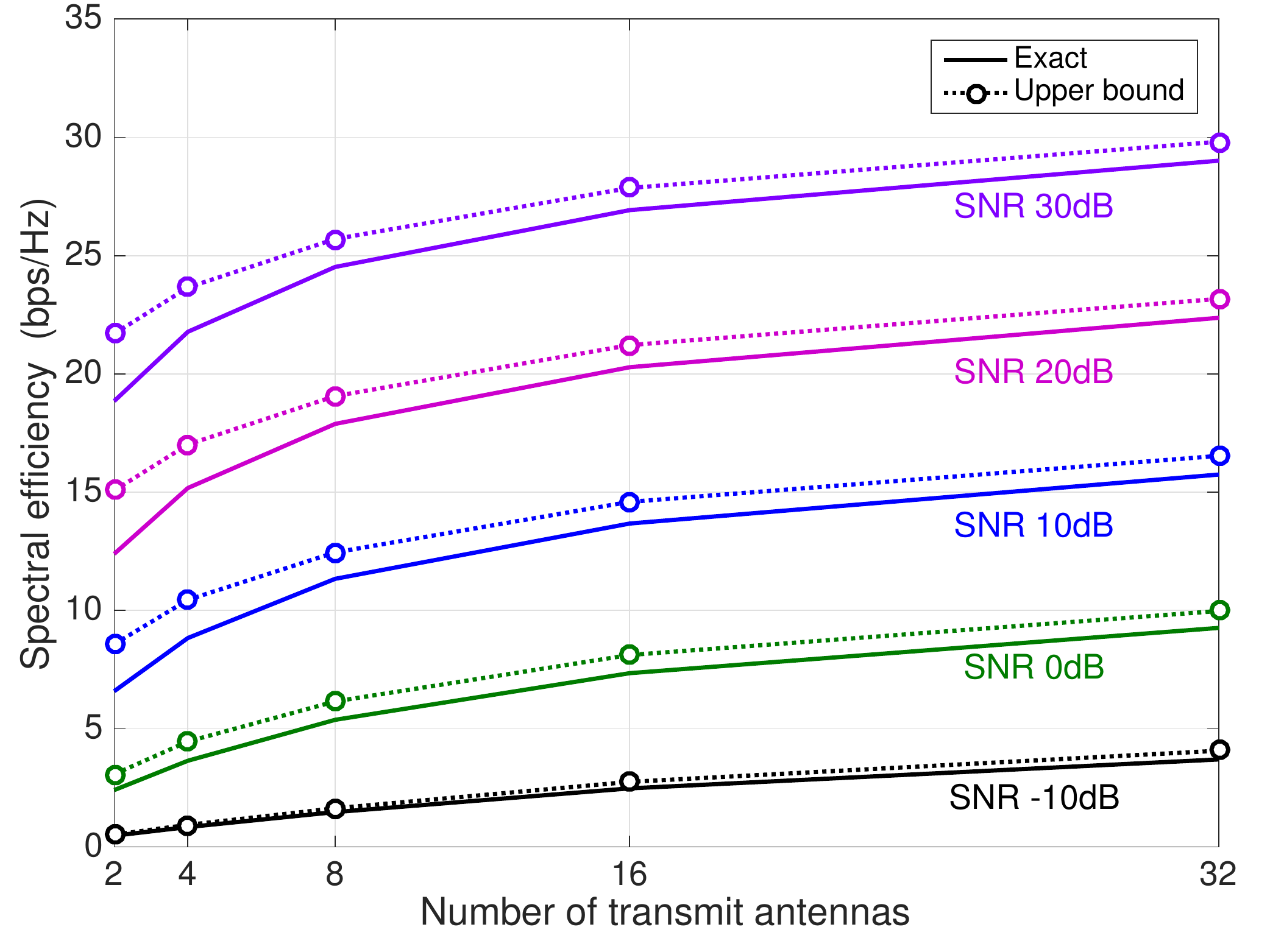}
		\label{fig:Fig_Approx_ErgodicCapacity_mmWaveCh}}
	\caption{Comparison between the exact value given in \eqref{eq:norm_se_exact} and the approximate value given in \eqref{eq:norm_se_approx}, which is Jensen's upper bound of \eqref{eq:norm_se_exact}. The number of receive antennas is fixed at two. IID Rayleigh channel model is assumed in (a), and the mmWave channel model is used in (b).}
	\label{fig:Fig_Approx_ErgodicCapacity}
\end{figure}

Unless otherwise mentioned, the adopted channel has $N_{\rm{cluster}}=8$ clusters whose center azimuth angles of arrival and  departure are uniformly distributed in $[-180^{\circ}, 180^{\circ}]$, and the center elevation angles of arrival and departure are uniformly distributed in $[-90^{\circ}, 90^{\circ}]$ when UPA is used in the simulation. Each cluster is composed of  $N_{\rm{subray}}=10$ subrays whose azimuth and elevation angles are assumed to be Laplacian distributed with angular spread of $5^{\circ}$ \cite{WINNER2}. Both ULA and UPA types are simulated, and the antenna spacing between antennas is $0.5\lambda$, where $\lambda$ is the signal wavelength. A raised-cosine filter with a roll-off factor one is adopted for the pulse shaping filter. The number of subcarriers $K$ is 4096, and the cyclic prefix length $D$ is assumed to be $K/4$ as in IEEE 802.11ad. All subrays within a cluster is assumed to have an identical delay such that $\tau_{c,1}=\cdots=\tau_{c,N_{\rm{subray}}}=\tau_{c}$. The cluster delay $\tau_{c}$ normalized to $T_{\rm{s}}$ is assumed to have a discrete uniform distribution in the cyclic prefix duration, $[0,D]$. The water-filling power control policy is used for all test cases.

\subsection{Evaluating the Relaxation of the Optimum Criterion } \label{subsec:eval_relax}

We used the relaxed optimum criterion in \eqref{eq:opt_criterion_relaxed} instead of the exact optimum criterion \eqref{eq:opt_criterion_equiv_onlyFrf}. \figref{fig:Fig_Approx_ErgodicCapacity} shows the exact value in the original problem 
 \begin{equation}\label{eq:norm_se_exact}
\frac{1}{KS} \sum_{k=1}^{K}  \sum_{s=1}^{S}   \log \left( 1 + \frac{ \lambda^{2}_{s} \left( \bH[k] \right)}{\sigma_{N}^2}  \right), 
\end{equation}
and the approximate value in the relaxed criterion 
\begin{equation}\label{eq:norm_se_approx}
   \log \left( 1 + \frac{1}{ KS} \sum_{k=1}^{K}  \sum_{s=1}^{S} \frac{\lambda^{2}_{s} \left( \mathbf{H} [k]  \right) }{\sigma_{N}^2} \right), 
\end{equation}
which is Jensen's upper bound of \eqref{eq:norm_se_exact}. 
It is well known that this bound is tight only at low SNR region. The bound, however, can be also tight even at high SNR if the number of transmit antennas is larger than the number of receive antennas.   
\figref{fig:Fig_Approx_ErgodicCapacity_IIDch} shows the bound and the exact value in IID Rayleigh fading channel according to the number of transmit antennas when the number of receive antennas is two. 
It is shown that the gap becomes smaller as the ratio of the number of transmit antennas to the number of receive antennas becomes larger.
The gap is negligible when the number of transmit antennas is more than 16 even at high SNR. 
\figref{fig:Fig_Approx_ErgodicCapacity_mmWaveCh} shows the results of the sparse mmWave channel case in \eqref{eq:H_k_model_modif}, where $N_{\rm{cluster}}=8, N_{\rm{subray}}=10$, and ULA type antennas are used at BS. These results indicate that the gap between the bound and the exact value is $\sim$1 bps/Hz when more than 16 antennas are deployed, which means that the relaxed optimization problem in \eqref{eq:opt_criterion_relaxed} is a reasonable approximation of the original problem in \eqref{eq:opt_criterion_equiv_onlyFrf} for large MIMO mmWave systems.

\subsection{Wideband Hybrid Precoding over Frequency Selective Channels}\label{subsec:WidebandHybridPrecodingOverFrequencySelectiveChannel}


In \figref{fig:Fig_SNRvsRate_NumRF_Rayleigh} and \figref{fig:Fig_SNRvsRate_NumRF_Sparse}, we evaluate the performance of the proposed wideband hybrid precoding design for SU-MIMO over a frequency selective channel in a fully-connected structure. The figures show the average mutual information per subcarrier according to SNR when the wideband hybrid precoding in \sref{sec:Fully} are used in the case of 16 transmit antennas and 4 receive antennas. The number of RF chains at the receiver is 4, and fully-digitalized baseband combining is used. The number of RF chains at the transmitter is 1, 2, 4, or 8. In \figref{fig:Fig_SNRvsRate_NumRF_Rayleigh}, the channel per subcarrier and per antenna is modeled as IID Rayleigh channel, which is an extreme case of an ideal rich scattering environment. The results show that there is a substantial loss from the fully-digitalized baseband precoding case even when eight RF chains are used in the wideband hybrid precoding. This, however, is not the case when the sparse mmWave channel is considered. \figref{fig:Fig_SNRvsRate_NumRF_Sparse} shows the wideband hybrid precoding performance when the mmWave channel model in \eqref{eq:H_k_model_modif} is adopted, with $N_{\rm{cluster}}=8$ and $N_{\rm{subray}}=10$. If each cluster has only one ray and the number of clusters is less than or equal to the number of RF chains, the performance of the proposed wideband hybrid precoding is the same as that of the fully-digital precoding as discussed in \sref{sec:Fully}. Even when the channel clusters have multiple subrays with angle spread $5^{\circ}$, \figref{fig:Fig_SNRvsRate_NumRF_Sparse} shows that the performance gap between the proposed hybrid precoding and the fully-digital solution is negligible when eight RF chains.

\begin{figure}[!t]
	\centering
	\subfigure[center][{IID Rayleigh channel model}]{
		\includegraphics[width=.63\columnwidth]{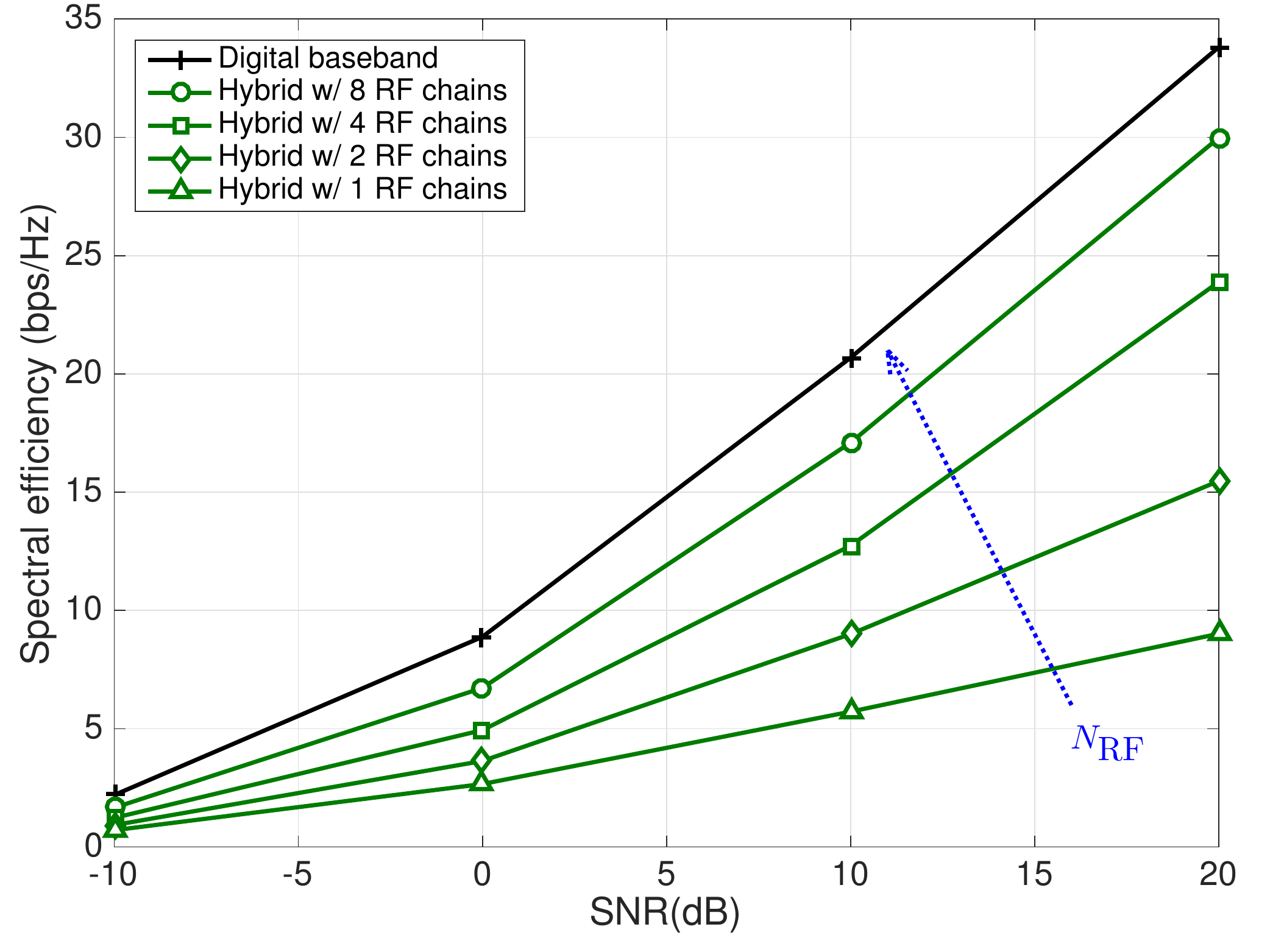}
		\label{fig:Fig_SNRvsRate_NumRF_Rayleigh}}
	\subfigure[center][{mmWave channel model (8 clusters, 10 subrays)}]{
		\includegraphics[width=.63\columnwidth]{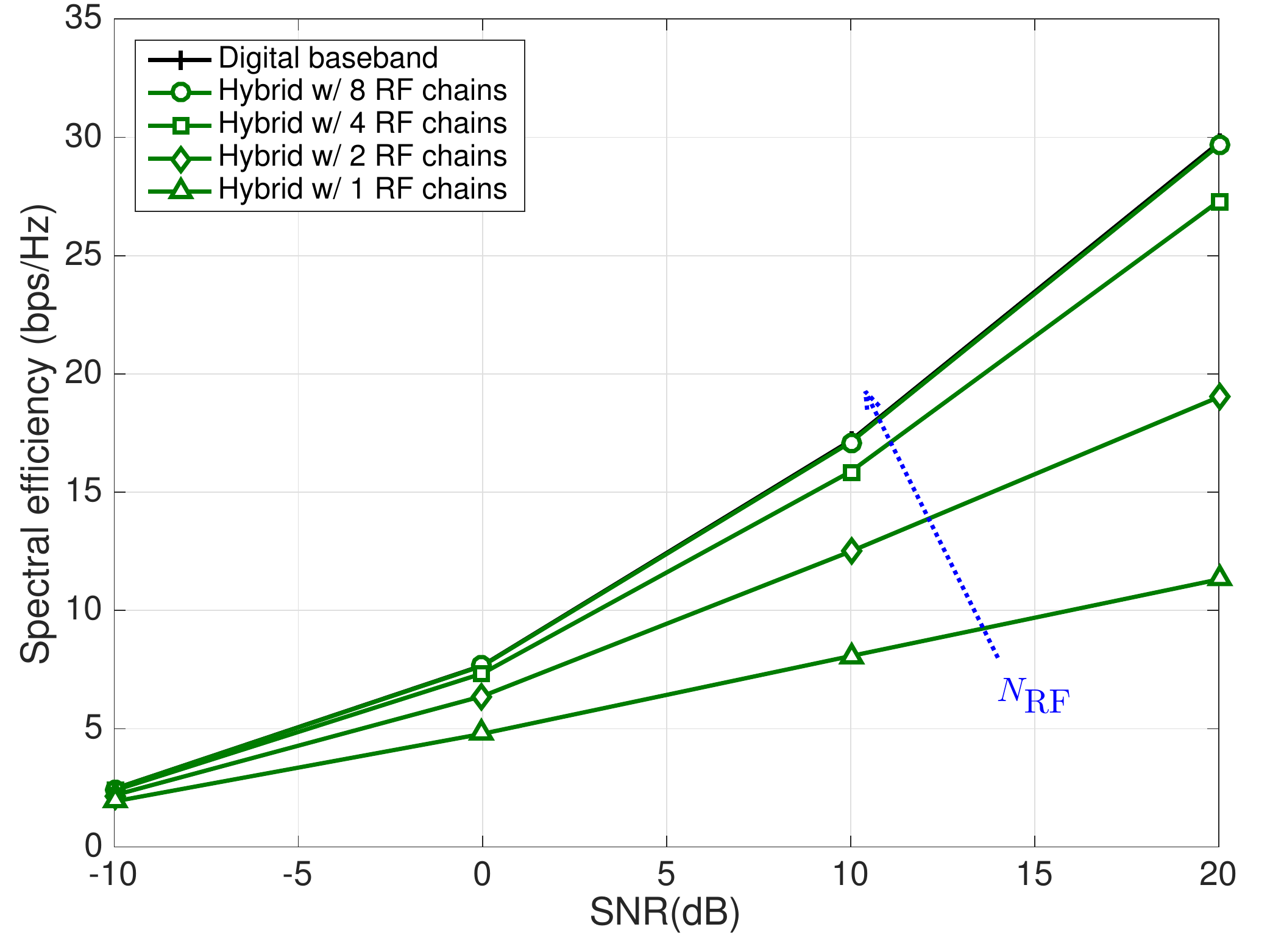}
		\label{fig:Fig_SNRvsRate_NumRF_Sparse}}
	\caption{Comparison between the fully-digitalized baseband precoding and the hybrid precoding with 1, 2, 4, or 8 RF chains in a fully-connected structure. The trasmitter uses 16 antennas (ULA), and the reciever uses 4 antennas (ULA). In (a), the channel at each subcarreir is modeled as IID Rayleigh channel. The mmWave channel model used in (b) is the same as in \figref{fig:Fig_Approx_ErgodicCapacity_mmWaveCh}.}
	\label{fig:Fig_SNRvsRate_NumRF}
\end{figure}

\subsection{Wideband Hybrid Precoding with Dynamic Subarray Structures}

In this subsection, we evaluate the performance of the proposed algorithm in the subarray structure. The proposed dynamic subarray technique is compared with several fixed subarray types as well as the fully-connected hybrid precoding and the fully-digitalized baseband precoding. For the dynamic subarray architecture, the proposed greedy algorithm is compared with the optimal exhaustive search algorithm. For comparison, we also simulate another simple technique of dynamic subarrays that selects the best subarray architecture among a predefined fixed subarray types. The simulations are conducted in various channel environments to establish the dependence of the dynamic subarray gain on channel parameters. In addition to evaluating the dynamic subarrays, we also establish which is the best fixed subarray structure and which channel parameters affect the decision of the best structure. 
 
\begin{figure}[!t]
	\centering
	\subfigure[center][{Fixed subarray types for simulations: 9 antennas (1x9 ULA) and 3 RF chains.}]{
		\includegraphics[width=.8\columnwidth]{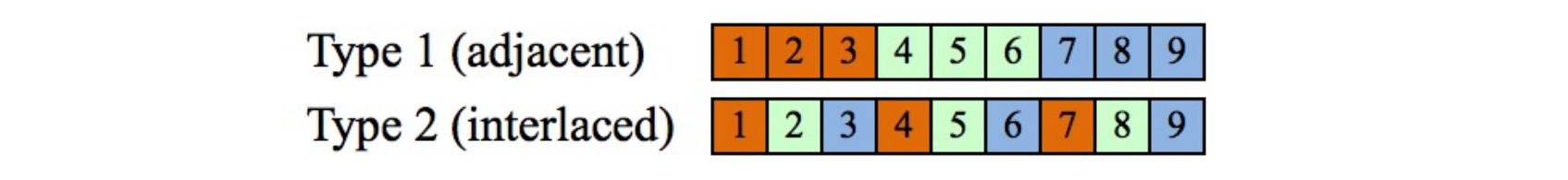}
		\label{fig:Fig_FixedSubarryType_9ULA_3RF}}
	\subfigure[center][{Spectral efficiency vs. SNR.}]{
		\includegraphics[width=.63\columnwidth]{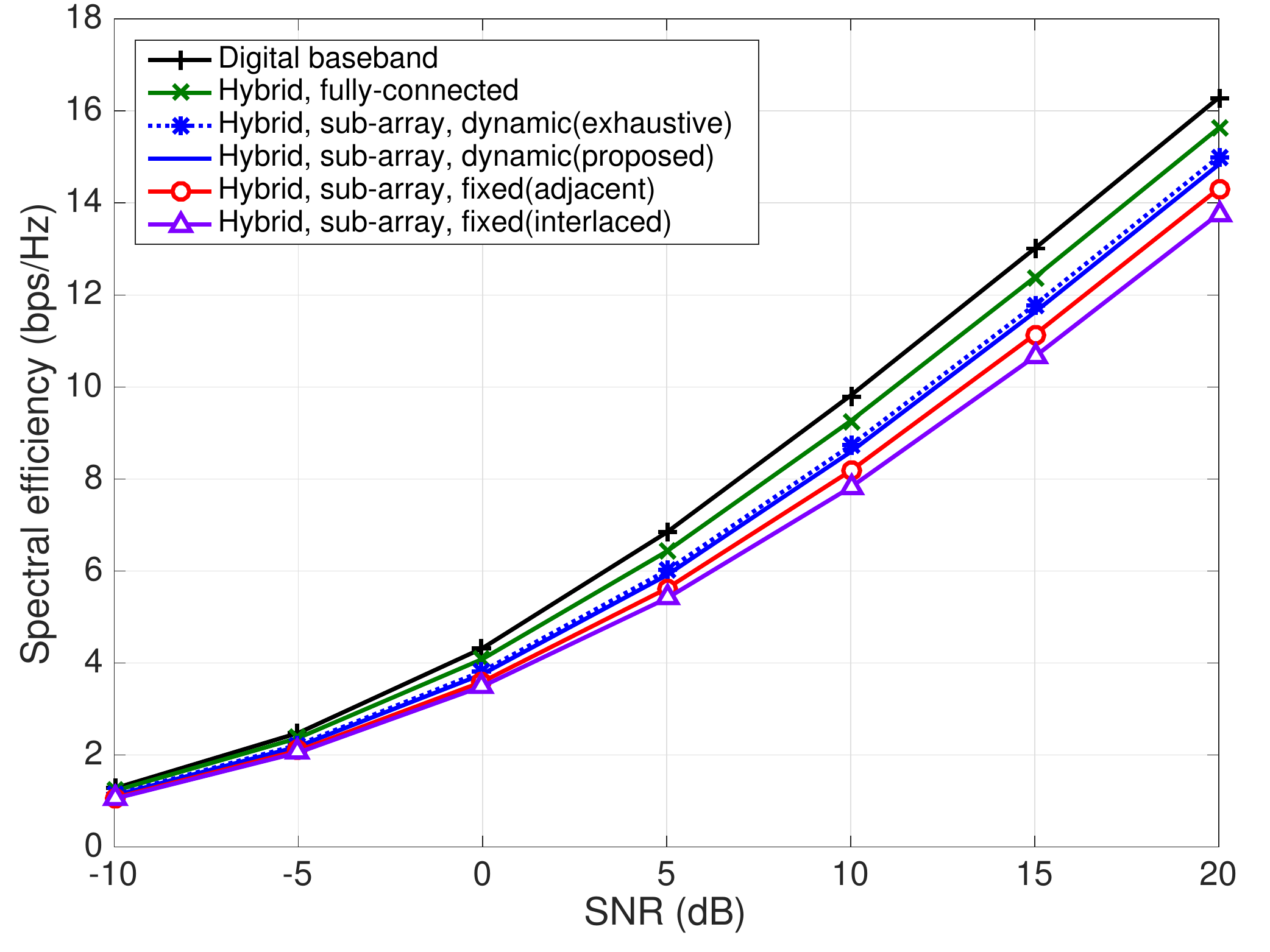}
		\label{fig:Fig_SNRvsRate_9ULA_3RF_360}}
	\caption{Comparion among various hybrid architectures when 9 antennas (ULA) and 3 RF chains are equipped at BS and 2 antennas (ULA) and 2 RF chains are equipped at MS. }
	\label{fig:Fig_SNRvsRate_9ULA_3RF}
\end{figure}

\textbf{Gain over fixed subarrays in ULA systems:} \figref{fig:Fig_SNRvsRate_9ULA_3RF} shows the comparison of various precoding techniques when the base station has 9 antennas (ULA) and 3 RF chains and the mobile station has 2 antennas (ULA) and 2 RF chains. For comparison, two fixed subarray types are used in the simulation as described in \figref{fig:Fig_FixedSubarryType_9ULA_3RF}: an adjacent type and an interlaced type. \figref{fig:Fig_SNRvsRate_9ULA_3RF_360} shows that, in the dynamic subarray architecture, the performance of the proposed algorithm is close to the optimal exhaustive search case even with a much lower complexity. The results also indicate that the adjacent type is better than the interlaced one among two fixed subarray structures. This is because the largest singular value of each adjacent type is larger than that of a interlaced type when the channel is correlated.

\begin{figure}[!t]
	\centering
	\subfigure[center][{Fixed subarray types for simulations: 64 antennas (8x8 UPA) and 4 RF chains.}]{
		\includegraphics[width=1.0\columnwidth]{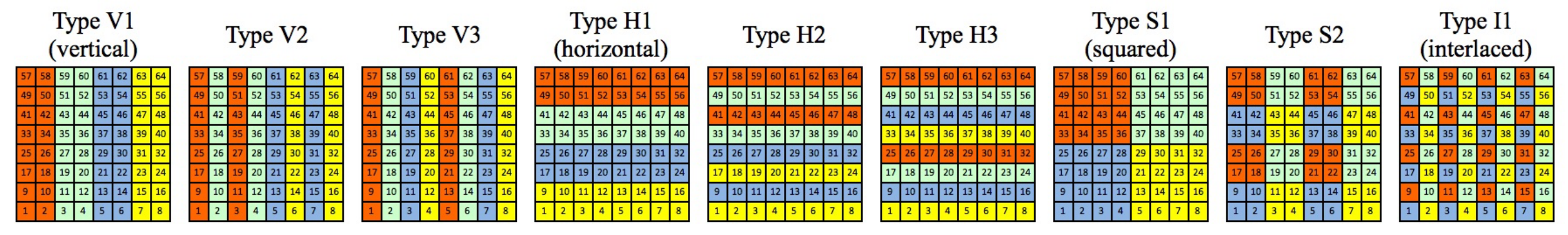}
		\label{fig:FixedSubarrayTypes_64UPA_4RF_360}}
	\subfigure[center][{Spectral efficiency vs. SNR.}]{
		\includegraphics[width=.63\columnwidth]{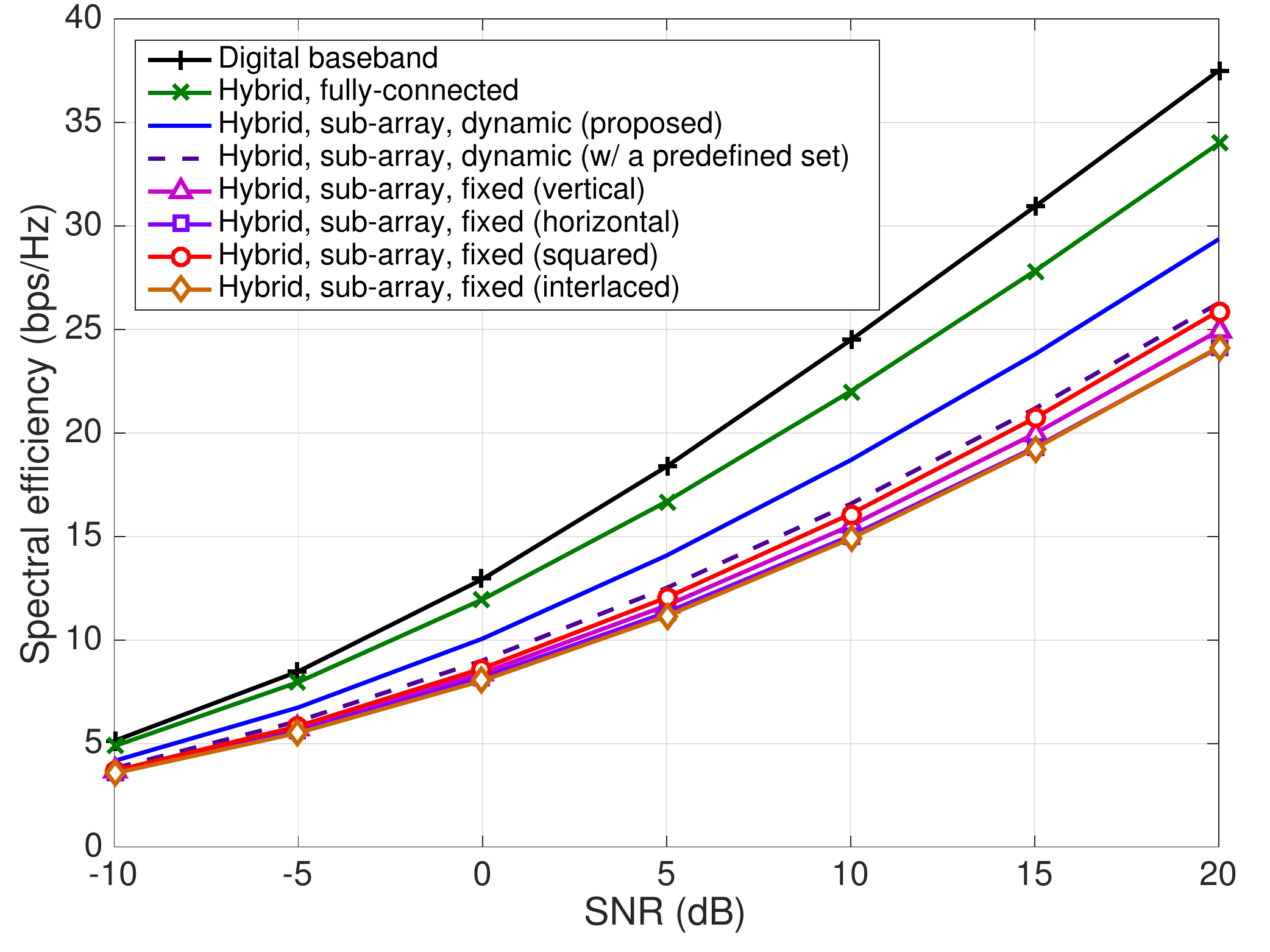}
		\label{fig:Fig_SNRvsRate_64UPA_4RF_360}}
	\subfigure[center][{Histogram of the selected fixed subarrays in a predefined set}]{
		\includegraphics[width=.63\columnwidth]{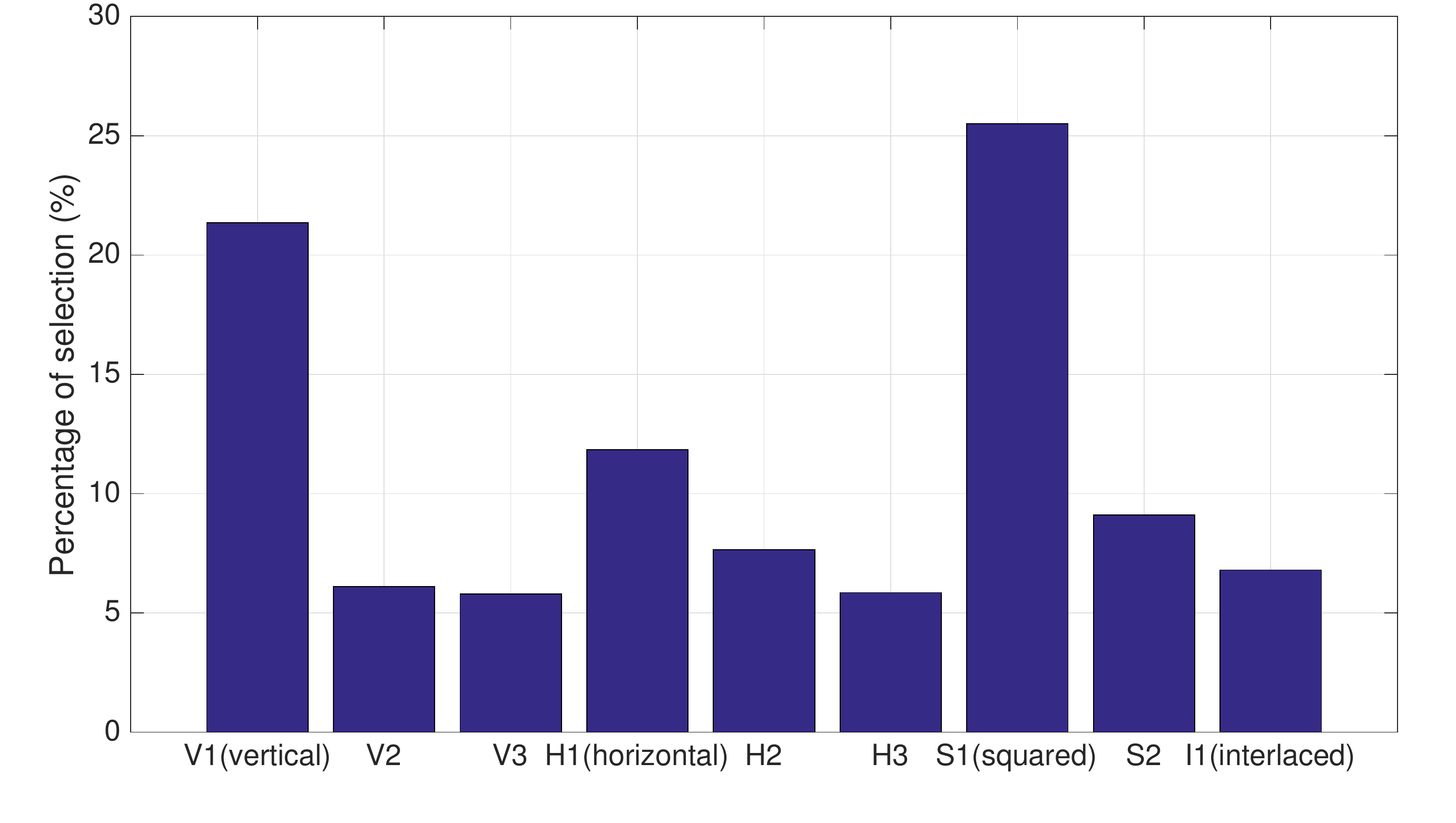}
		\label{fig:Fig_Hist_64UPA_4RF_360}}
	\caption{Comparion among various hybrid architectures when 64 antennas (8x8 UPA) and 4 RF chains are equipped at BS and 4 antennas (2x2 UPA) and 4 RF chains are equipped at MS. }
	\label{fig:Fig_SNRvsRate_64UPA_4RF}
\end{figure}

\textbf{Gain over fixed subarrays in UPA systems:} \figref{fig:Fig_SNRvsRate_64UPA_4RF} shows a simulation result when 64 antennas (8x8 UPA) and 4 RF chains are used at the base station and 4 antennas (2x2 UPA) and 4 RF chains are used at the mobile station. Nine fixed subarray types are used in the simulation as shown in \figref{fig:FixedSubarrayTypes_64UPA_4RF_360}. The optimal exhaustive search algorithm for the dynamic subarrays was not simulated because its computational complexity is too high in this case. Instead, a simple dynamic subarray algorithm that selects the best subarray structure in a predefined set, which consists of the nine fixed subarray types in \figref{fig:FixedSubarrayTypes_64UPA_4RF_360}, was simulated for comparison. Even though the dynamic subarray technique with a predefined set outperforms any fixed subarray types, this naive dynamic subarray technique is considerably outperformed by the proposed algorithm. This is because the simple dynamic algorithm selects the best subarray type among nine fixed subarray types while the proposed algorithm can decide the best subarray type among all possible types, whose total number is $1.4 \times 10^{37}$. Compared to the results in \figref{fig:Fig_SNRvsRate_9ULA_3RF}, we can see that the gain of the proposed dynamic algorithm becomes higher as the number of antennas and RF chains gets larger. Apart from the dynamic subarrays, the figures shows the information about the best structure when a fixed subarray structure is applied. As can be seen in \figref{fig:Fig_SNRvsRate_64UPA_4RF_360}, the squared type is the best structure among the fixed subarray structures, and the vertical type is the second. This trend is consistent with  the results in \figref{fig:Fig_Hist_64UPA_4RF_360} that shows the selection ratio of fixed subarray structures when a simple dynamic algorithm with a predefined set is used.

\textbf{Impact of channel parameters:} 
The best fixed subarray structure as well as the dynamic subarray gain depends the channel environment. In particular, the distributions of azimuth and elevation angles of channel paths play an important role, as the angle distributions affect the largest singular value of each subarray.  
%
The azimuth angles and elevation angels can be confined within some range in some cell deployment scenarios. For example, the range of incoming azimuth angles can be restricted in a 3-sectorized cell scenario where sector antennas with directional antenna gain are equipped. In addition, the range of azimuth angles can be different from that of elevation angles. For example, many outdoor scenarios are usually assumed to have a smaller range of elevation angles than that of azimuth angles \cite{WINNER2}.  

\begin{figure}[!t]
	\centering
	\subfigure[center][{Center azimuth angles are distributed in $[-\phi_{\rm{max}},\phi_{\rm{max}}]$}]{
		\includegraphics[width=.63\columnwidth]{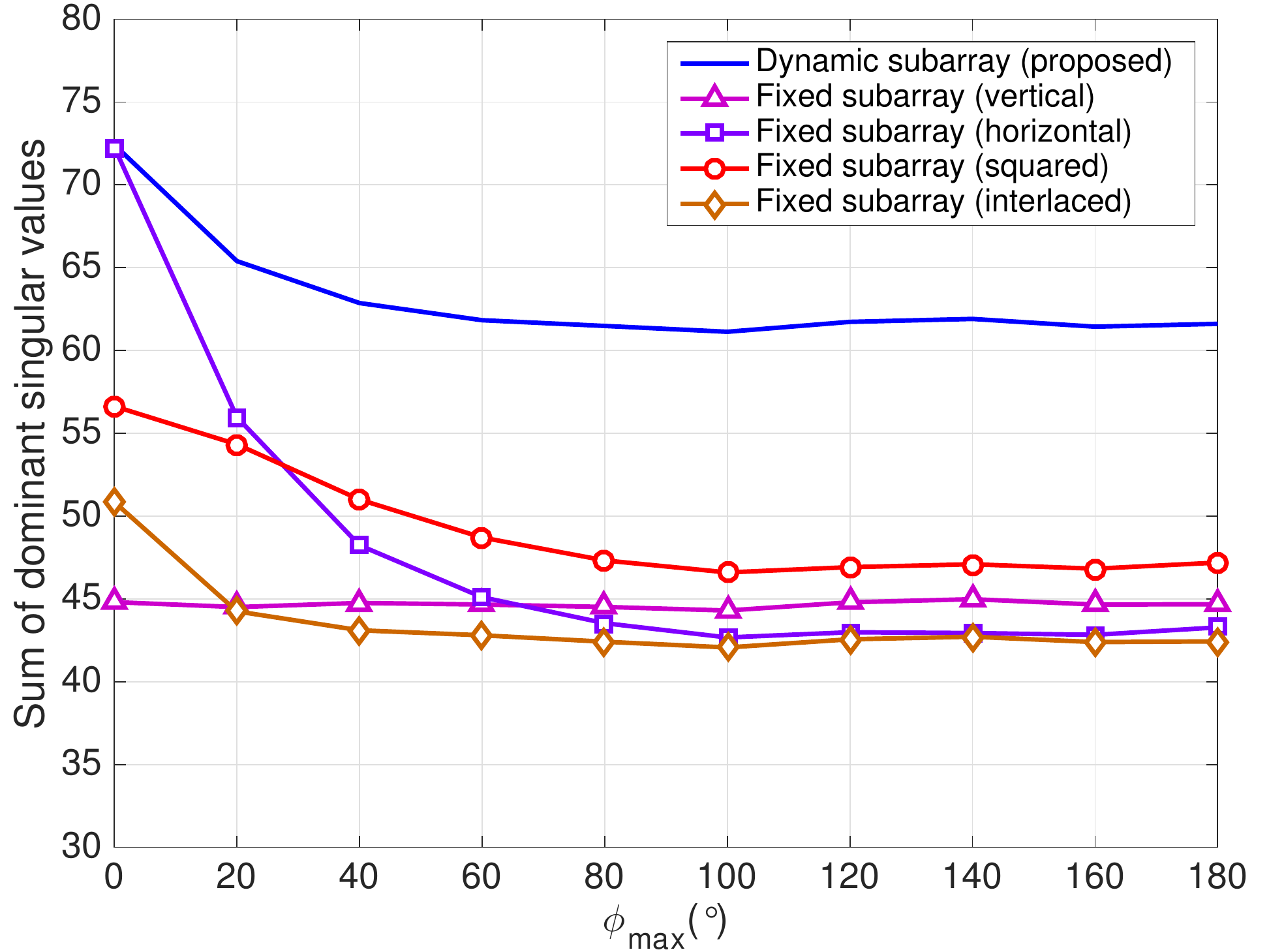}
		\label{fig:Fig_AzimuthAngleRange}}
	\subfigure[center][{Center elevation angles are distributed in $[-\theta_{\rm{max}},\theta_{\rm{max}}]$.}]{
		\includegraphics[width=.63\columnwidth]{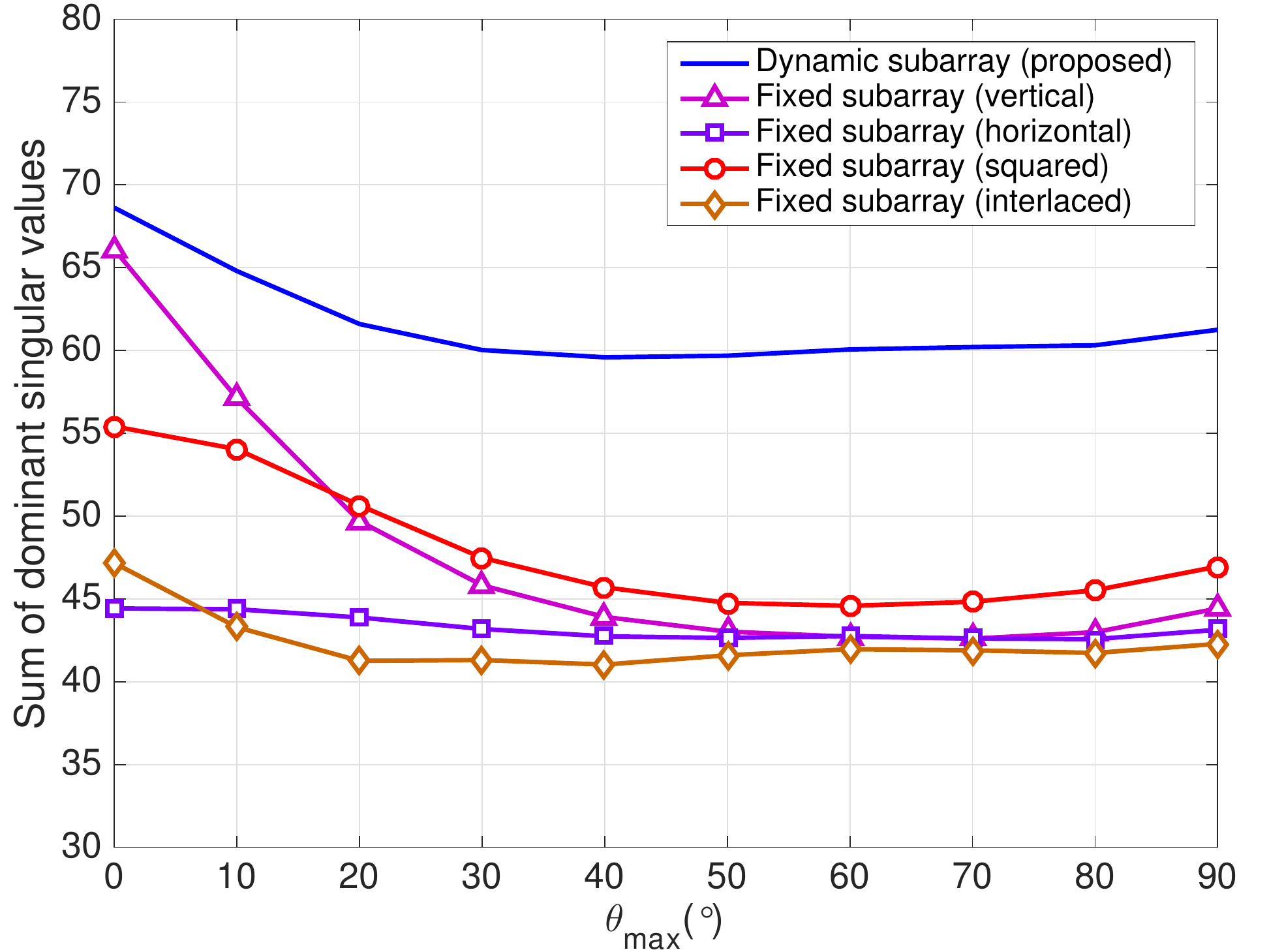}
		\label{fig:Fig_ElevationAngleRange}}
	\caption{The objective function in \eqref{eq:opt_criterion_dynamic_subarray} vs. range of angles. The BS uses 81 antennas (9x9 UPA) and 9 RF chains, and the MS uses 4 antenna (2x2 UPA) and 4 RF chains. The angle spread within a cluster is $5^{\circ}$, and SNR is 10dB. In (a), The center azimuth angles per cluster are uniformly distributed within $[-\phi_{\rm{max}},\phi_{\rm{max}}]$, and the center elevation angles per cluster are uniformly distributed within $[-90^{\circ},90^{\circ}]$. In (b), The center azimuth angles per cluster are uniformly distributed within $[-180^{\circ},180^{\circ}]$, and the center elevation angles per cluster are uniformly distributed within $[-\theta_{\rm{max}},\theta_{\rm{max}}]$. }	
	\label{fig:Fig_AngleRangePerCluster}
\end{figure}

\figref{fig:Fig_AngleRangePerCluster} shows the influence of the azimuth and elevation angle range on the objective function in \eqref{eq:opt_criterion_dynamic_subarray}, which is the sum of the dominant singular values of each subarray. In \figref{fig:Fig_AzimuthAngleRange}, the center azimuth angles per cluster are assumed to be uniformly distributed within $[-\phi_{\rm{max}},\phi_{\rm{max}}]$, so the maximum azimuth difference from the antenna boresight angle is limited to $\phi_{\rm{max}}$. The center elevation angles are assumed to be uniformly distributed within $[-90^{\circ},90^{\circ}]$, which means that there is no restriction on the elevation angle range. The figure shows that the gain of the dynamic subarray technique increases as the range of angles becomes wider. The figure also demonstrates that the best fixed subarray structure varies according to the angle range. The horizontal fixed type structure outperforms other fixed types when the azimuth angles are confined within a small range. The main reason is due to the difference in the range of angles. If the range in the azimuth angles is narrower than the range in the elevation angles, the largest singular values of the covariance channel matrix of each horizontal row is larger than that of each vertical column, and thus the horizontal fixed type structure has a higher value of the objective function in  \eqref{eq:opt_criterion_dynamic_subarray} than the vertical fixed type and others. The squared fixed type, however, becomes the best among fixed subarray types as the range of azimuth angles becomes bigger. This is because the horizontal domain and the vertical domain have a similar level of correlation, which enables the squared fixed type subarray structure to have the largest singular value due to the smallest distances between antennas. A similar phenomenon occurs when the elevation angles have a limited range. \figref{fig:Fig_ElevationAngleRange} shows that the vertical fixed type outperforms other fixed types when the range of elevation angles is small and the squared fixed type is the best at larger ranges as  \figref{fig:Fig_AzimuthAngleRange}.


\begin{figure}[!t]
	\centerline{\resizebox{0.64\columnwidth}{!}{\includegraphics{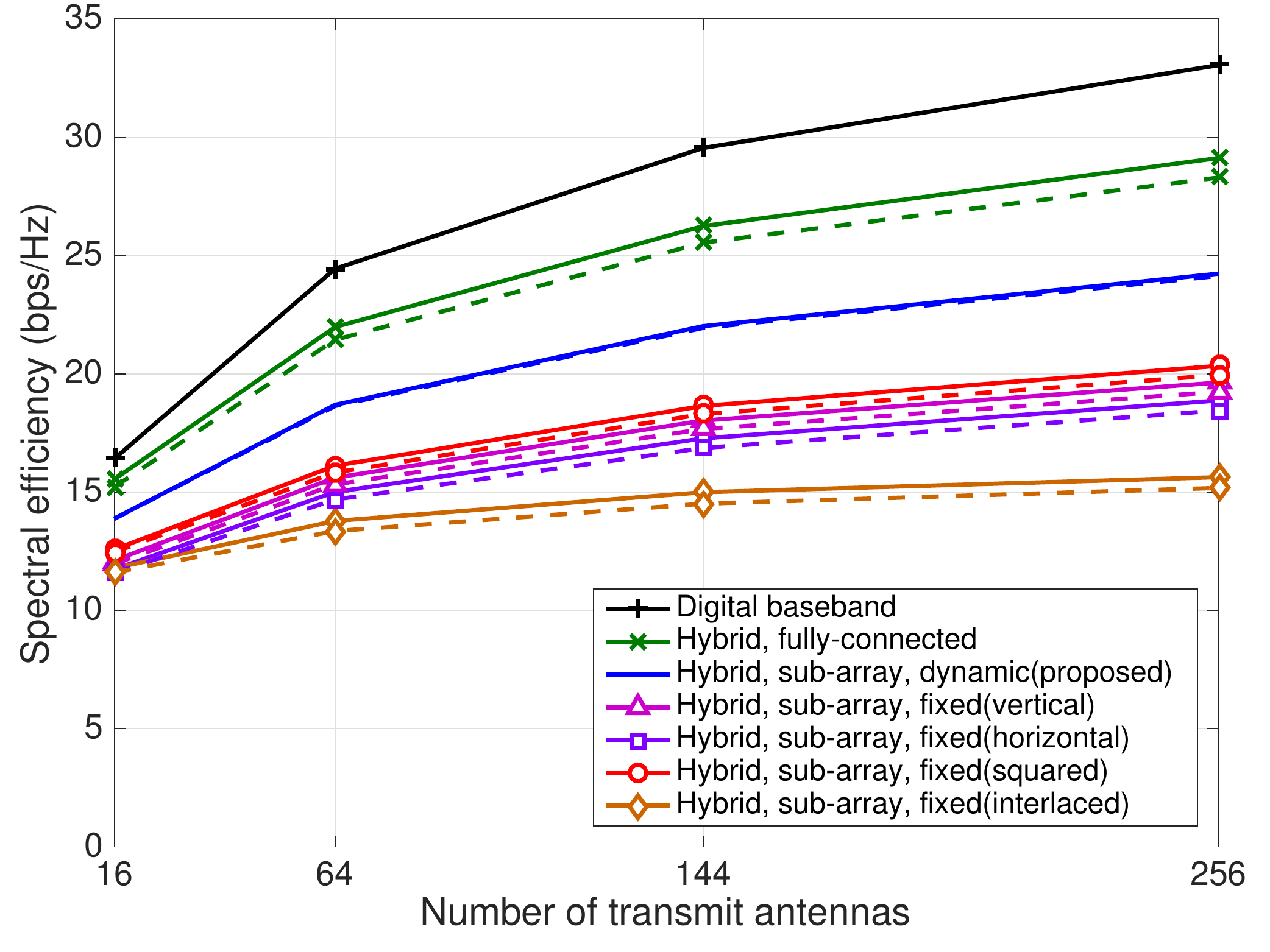}}}
	\caption{Spectral efficiency vs. number of transmit antennas. The number of   RF chains at BS is fixed at four. The MS uses 4 antenna (2x2 UPA) and 4 RF chains. The center azimuth angles per cluster are uniformly distributed in $[-180^{\circ},180^{\circ}]$, and the center elevation angels per cluster are uniformly distributed in $[-90^{\circ},90^{\circ}]$, the angles per subray within a cluster are Laplacian distributed with 5$^{\circ}$, and SNR is 10dB. The dashed curves indicate the case of RF precoding with only phase shifters under the constant modulus constraint while the solid ones represent the case of RF precoding without the constraint. }
	\label{fig:Fig_NumTx_NumRF_wConstantModulus}
\end{figure}

\textbf{Performance with different antenna array sizes:} The gain of proposed dynamic subarray structure also varies with the number of antennas as can be seen in \figref{fig:Fig_NumTx_NumRF_wConstantModulus}. In the figure, the MS has 4 antennas (2x2 UPA) and 4 RF chains, and the BS has 16, 64, 144, or 256 antennas (4x4, 8x8, 12x12, or 16x16 UPA). The number of RF chains at BS is assumed to be fixed at four. The figure also shows the impact of the phase shifter constraint. The solid curves represent  the unconstrained RF precoding, and the dashed curves indicate the constrained RF precoding with phase shifters in the analog RF precoding. The figure shows that the loss from using the phase shifters in the analog domain is not substantial. Apart from the phase shifter constraint issue, \figref{fig:Fig_NumTx_NumRF_wConstantModulus} also shows that the gain of the proposed dynamic structure becomes higher as the number of transmit antennas becomes larger.

\section{Conclusions} \label{sec:Conclusion}

In this paper, we developed hybrid analog/digital precoding design for wideband mmWave MIMO-OFDM systems with frequency selectivity. Considering a relaxation of the mutual information maximization problem, we derived a near-optimal closed-form solution for both fully-connected and partially-connected hybrid architectures. Simulation results showed that the spectral efficiency of the proposed wideband hybrid precoding designs approaches that obtained with fully-digital precoding. Inspired by the developed closed-form solution, we explored the potential spectral efficiency gain if the antenna subarrays can be adaptively adjusted according to the large channel statistics. For that, we first developed a criterion for constructing these subarrays, and used to design an antenna partitioning algorithm. One insight, drawn from the developed criterion, is that forming each subarray with more correlated antenna elements normally leads to an efficient subarray structure. Simulation results showed that the achievable spectral efficiency by dynamic subarrays outperforms that of fixed subarray architectures. For future work, it would be interesting to evaluate the trade-off between the achieved spectral efficiency and the consumed energy of the dynamic subarray structure, and compare it with the fully-connected and the fixed-subarray architectures.

\appendices
\section{}\label{app:prop_bounds}

\begin{proof}[Proof of Proposition \ref{prop:bounds}]
Let the diagonal elements of $\bR_{\mathcal{S}}$ be normalized to one and $|\mathcal{S}|$ be $n$. Then, \eqref{eq:m_and_s} becomes
\begin{equation}\label{eq:m_and_s_approx}
m=1, \;\;\; s=\left( \frac{1}{n} ||\bR_{\mathcal{S}}||^2_{\rm{F}} - 1  \right)^{\frac{1}{2}}=\left( \frac{2}{n} \sum_{i=1}^{n}\sum_{j>i}^{n} |[\bR_{\mathcal{S}}]_{i,j}|^2  \right)^{\frac{1}{2}}.
\end{equation}
Define $\gamma_{k}$ for $k=1,\dots,\frac{n(n-1)}{2}$ to be
\begin{equation}\label{eq:gamma_k_def}
\begin{split}
\gamma_{1} =  |[\bR_{\mathcal{S}}]_{1,2}|, \gamma_{2} =  |[\bR_{\mathcal{S}}]_{1,3}|, \cdots, \gamma_{n-1} &=  |[\bR_{\mathcal{S}}]_{1,n}|, \\
\gamma_{n} =  |[\bR_{\mathcal{S}}]_{2,3}|, \cdots , \gamma_{2n-3} &=  |[\bR_{\mathcal{S}}]_{2,n}|, \\
\ddots \;\;\;\;\;\; & \vdots \\
\gamma_{\frac{n(n-1)}{2}} &=  |[\bR_{\mathcal{S}}]_{n-1,n}|.
\end{split}
\end{equation}
Then, the lower and upper bounds in \eqref{eq:upper_and_lower_bound_def} can be rewritten as
\begin{equation}\label{eq:lower_and_upper_bound_def_approx}
\begin{split}
\lambda_{1, \rm{LB}} \left( \bR_{\mathcal{S}} \right)&=1+\left( \frac{2}{n(n-1)} \sum_{i=1}^{n}\sum_{j>i}^{n} |[\bR_{\mathcal{S}}]_{i,j}|^2  \right)^{\frac{1}{2}}= 1+\left( \frac{2}{n(n-1)} \sum_{k=1}^{n(n-1)/2} \gamma_k^2  \right)^{\frac{1}{2}},\\
\lambda_{1, \rm{UB}} \left( \bR_{\mathcal{S}} \right)&= 1+ \left( \frac{2(n-1)}{n} \sum_{i=1}^{n}\sum_{j>i}^{n} |[\bR_{\mathcal{S}}]_{i,j}|^2  \right)^{\frac{1}{2}}= 1+\left( \frac{2(n-1)}{n} \sum_{k=1}^{n(n-1)/2} \gamma_k^2  \right)^{\frac{1}{2}}.
\end{split}
\end{equation}
The approximate value of the largest singular value in \eqref{eq:approx_singular_value} can be represented as
\begin{equation}\label{eq:approx_singular_value_rewrite}
\hat{\lambda}_{1} \left( \bR_{\mathcal{S}} \right) =1+ \frac{2}{n}\sum_{i=1}^{n}\sum_{j>i}^{n} |[\bR_{\mathcal{S}}]_{i,j}| = 1+  \frac{2}{n}\sum_{k=1}^{n(n-1)/2} \gamma_k
\end{equation}
The relation between the lower bound of the exact value and the approximate value is 
\begin{equation}\label{eq:proof_on_lower_bound}
\frac{ \lambda_{1, \rm{LB}}\left( \bR_{\mathcal{S}} \right) - 1}{\hat{\lambda}_{1} \left( \bR_{\mathcal{S}} \right)-1}  = \left( \frac{n}{2(n-1)} \cdot \frac{\sum_{k=1}^{n(n-1)/2} \gamma_k^2}{\left(\sum_{k=1}^{n(n-1)/2} \gamma_k \right)^2} \right)^{\frac{1}{2}} \leq 1 \;\; \textrm{for} \;\; n \geq 2
\end{equation}
because
\begin{equation}\label{eq:proof_on_lower_bound_sub1}
\frac{n}{2(n-1)} \leq 1 \;\; \textrm{for} \;\; n \geq 2
\end{equation}
and
\begin{equation}\label{eq:proof_on_lower_bound_sub2}
\begin{split}
\left(\sum_{k=1}^{n(n-1)/2} \gamma_k \right)^2 &= \sum_{k=1}^{n(n-1)/2} \gamma_k^2 +2 \sum_{k=1}^{n(n-1)/2}  \sum_{m>k}^{n(n-1)/2}  \gamma_k \gamma_m\\
&\geq   \sum_{k=1}^{n(n-1)/2} \gamma_k^2,
\end{split}
\end{equation}
where the inequality stems from the fact that $\gamma_k \geq 0, \forall k$ by definition in \eqref{eq:gamma_k_def}. From \eqref{eq:proof_on_lower_bound}, we can conclude that
\begin{equation}\label{eq:proof_on_lower_bound_final}
\hat{\lambda}_{1} \left( \bR_{\mathcal{S}} \right) \geq  \lambda_{1, \rm{LB}} \left( \bR_{\mathcal{S}} \right)   \;\; \textrm{for} \;\; n \geq 2 .
\end{equation}
Now, consider the relationship between the upper bound of the exact and approximate singular values. This ratio between the two values can be written as
\begin{equation}\label{eq:proof_on_upper_bound}
\frac{ \lambda_{1, \rm{UB}}\left( \bR_{\mathcal{S}} \right) - 1}{\hat{\lambda}_{1} \left( \bR_{\mathcal{S}} \right)-1}  = \left( \frac{n(n-1)}{2} \cdot \frac{\sum_{k=1}^{n(n-1)/2} \gamma_k^2}{\left(\sum_{k=1}^{n(n-1)/2} \gamma_k \right)^2} \right)^{\frac{1}{2}} \geq 1 \;\; \textrm{for} \;\; n \geq 2
\end{equation}
because
\begin{equation}\label{eq:proof_on_upper_bound_sub1}
\begin{split}
\left(\sum_{k=1}^{n(n-1)/2} \gamma_k \right)^2 &=\left(\sum_{k=1}^{n(n-1)/2} 1 \cdot \gamma_k \right)^2 \\
&\leq  \left(\sum_{k=1}^{n(n-1)/2} 1^2 \right) \left(\sum_{k=1}^{n(n-1)/2} \gamma_k^2 \right)\\
&= \frac{n(n-1)}{2} \left(\sum_{k=1}^{n(n-1)/2} \gamma_k^2 \right)
\end{split} 
\end{equation}
which results from Cauchy-Schwarz inequality. Therefore, we get 
\begin{equation}\label{eq:proof_on_upper_bound_final}
\hat{\lambda}_{1} \left( \bR_{\mathcal{S}} \right) \leq \lambda_{1, \rm{UB}} \left( \bR_{\mathcal{S}} \right) \;\; \textrm{for} \;\; n \geq 2 ,
\end{equation}
which completes the proof.
\end{proof}
\linespread{1.2}
\bibliographystyle{IEEEtran}
\bibliography{IEEEabrv,Refbib_Dynamic}

\end{document}